\documentclass[11pt]{article}
\usepackage{latexsym}
\usepackage{graphicx}
\usepackage{amsmath}
\usepackage{color}
\usepackage{xcolor}
\usepackage{amsfonts}
\usepackage{natbib}
\usepackage{soul}
\usepackage{hyperref}
\usepackage{authblk}
\usepackage{amssymb}
\usepackage{amsthm}
\newtheorem{theorem}{Theorem}

\newtheorem{lemma}[theorem]{Lemma}

\newtheorem{corollary}[theorem]{Corollary}

\theoremstyle{break}
\newtheorem*{remark}{Remark}

\newcommand{\mbf}[1]{\mbox{\boldmath$#1$}}

\title{Quiz Show Games: \\ Searching with Bimodal Hiding}

\author[1]{Mingshi Yao}
\author[2]{Thomas Lidbetter}
\author[1]{Melike Baykal-G\"{u}rsoy}
\affil[1]{Department of Industrial and Systems Engineering, Rutgers University}
\affil[1]{Department of Management Science and Information Systems, Rutgers University}
\date{\today}

\providecommand{\keywords}[1]{\textbf{\textbf{Keywords:}} #1}

\begin{document}
%\date{Today}
\maketitle

\begin{abstract}
\noindent We consider a quiz show game in which a contestant is presented with a sequence of questions. Each time the contestant answers a question correctly, she receives a prize and proceeds to the next question; the probability of answering each question correctly is given. If the contestant answers a question incorrectly, she receives a consolation prize and the game ends. The contestant's problem of determining the optimal order in which to answer questions, for known fixed parameters, is a classic one studied in \cite{kadane1969quiz}, and admits a simple index-based solution. We consider game-theoretic versions of this problem in which a game show host can choose how to allocate a fixed prize budget. Our models are motivated by operational search problems in national security involving reconnaissance missions and inspecting for evidence of nuclear enrichment, as well as certain scheduling problems. We study three variants of the game, corresponding to different ways in which the host can distribute the prize money. For  the first variant, we provide complete closed-form solutions, including equilibrium strategies and the value of the game. We reduce the second variant to a game solved in the literature. For the third variant, we obtain partial results by analyzing a more general game with a geometric structure.
\end{abstract}

\keywords{search games; zero-sum games; submodularity; defense}

\section{Introduction}

In \cite{kadane1969quiz}, the author considered a problem (which we will refer to as the {\em Quiz Show Problem}) in which a contestant must choose which order to answer a finite set of questions. Each question has a {\em main prize} and a {\em consolation prize}, which are both known to the contestant. The probability she answers any given question correctly is also known to the contestant. If she answers a question correctly she receives a certain prize, and can choose which question to answer next (if there are any remaining unanswered questions); if not, she receives a consolation prize and the game ends. The problem of which order to answer the questions to maximize the expected winnings can be found using a simple index-based policy. 

\cite{kadane1969quiz} actually analyzed a wider range of sequential problems, of which the Quiz Show Problem is an example.

The Quiz Show Problem may be viewed as a model for certain sequencing problems relating to national security, where the objective is to gather hidden information. For instance, suppose a military unit wishes to undertake a reconnaissance mission, requiring information gathering in several locations by a drone. At each location, there is a certain value to the information that may be gathered and transmitted back to the base, but there is also some probability that the drone is captured or shot down (although some information may still be gathered). If the drone is not impaired, it continues to the next locations. The problem is to choose what order to visit the locations to maximize the expected value of information gathered. Here, the locations correspond to the questions in the Quiz Show Problem. The probability of capture corresponds to the probability of getting a question wrong, and the value of the information gathered corresponds to the value of the prizes.

Another problem in national security relates to inspecting for evidence of nuclear enrichment. Suppose a body such as the International Atomic Energy Agency (IAEA) is tasked with inspecting locations in a country that is suspected of breaching nuclear non-proliferation agreements. The IAEA wishes to maximize the value of information they gather from their inspections. The amount of information they gather at a particular site depends on whether or not they find evidence of a breach. If they do not find a breach, they may still gather valuable information. There is a certain probability of detecting a breach, depending on the site. If they do find a breach, the host country will expel the inspectors and the process will end, otherwise they may proceed to the next site. The problem is to choose what order to inspect the sites the maximize the expected value of the information gathered. The sites correspond to the questions in the Quiz Show Problem, and the value of the information gathered corresponds to the prizes. Counterintuitively, detecting a breach corresponding to answering a question incorrectly and receiving the consolation prize. In this set-up, the consolation prize could be larger than main prize for each question.

The Quiz Show Problem can also be viewed as a scheduling problem, related to the one introduced by \cite{stadje1995selecting}, and studied independently by \cite{agnetis2009sequencing}. It features a finite set of jobs that must be scheduled to be processed by a machine in some order. Each job is associated with a given reward and a given success probability, which is the probability the machine does not break down when processing the job. A job's reward is collected if it is successfully completed, otherwise no reward is collected and no further jobs may be processed. The problem is to choose which order to process the jobs to maximize the expected reward collected. This can be viewed as a special case of the Quiz Show Problem, where the consolation prizes are all zero. If the consolation prizes are non-zero, the problem has a natural interpretation in the context of scheduling, where we might imagine that the unsuccessful completion of a job results in some partial reward.

In the national security examples we have considered above, it is unrealistic to assume that we know in advance the value of the information in each of the locations. In the scheduling example, the rewards of the jobs may be uncertain. For these reasons, in this paper we consider different variants of the Quiz Show Problem in which the values of the prizes are chosen by an adversary (which may be Nature). In this case, rather than seeking a policy to maximize the expected rewards obtained, we seek robust randomized policies that maximize the worst-case (maximum) reward.

We use a game theoretic setting to analyze these variants, supposing that some fixed prize budget is distributed among the questions by an adversary (the game show host), who wishes to minimize the contestant's winnings. 
We use male pronounces for the host (``h'' for ``host/hider'' and ``he'') and female pronouns for the contestant (``s'' for ``searcher'' and ``she'').
In the following subsection we explain more precisely the game variants, and outline our main results.

\subsection{Problem Definitions and Main Results}

Quiz show games are defined as games between a contestant and a host. There are $n$ questions, $V \equiv \{1,\dots,n\}$ in the quiz, and there are two pots of money totaling $M$ (the main prize money) and $C$ (the consolation prize money). The main prize money is split between the $n$ questions so that the main prize for question $i$ is $q_i M$ and the consolation prize for question $i$ is $q'_i C$, for each $i=1,\ldots,n$, where $\mbf q, \mbf q' \in \mathbb{R}^n_+$ and $\sum_{i=1}^n q_i = \sum_{i=1}^n q'_i = 1$. Depending on the variant of the game we are considering, $\mbf q'$ may be fixed or chosen by the host ($\mbf q$ is always chosen by the host); also, the parameters $M$ and $C$ may be fixed, or they may be chosen by the host, as explained in more detail later in this subsection.

When answering question $i$, the contestant answers correctly with probability $\alpha_i$, where $\alpha_i$ is some parameter with $0 < \alpha_i < 1$, and she receives the main prize money $q_i M$ allocated to that question. She may then answer another question. With probability $1-\alpha_i$, she receives the consolation prize money $q'_i C$ allocated to that question, and can answer no further questions.

We consider three variants of the game.
In each variant, a pure strategy for the contestant is simply a permutation $\sigma: V \rightarrow V$ of the questions, specifying the order in which she chooses to answer them, where $\sigma(i)$ is the $i$th question to be answered. We denote the set of all permutations of $V$ by $\Sigma_n$.

In the first variant of the game, we assume that the consolation prizes for each question are fixed parameters of the game, and the game show host must choose how to distribute the main prize money $M$ among the $n$ questions. In this case case, $M$, $C$ and $\mbf q'$ are fixed parameters of the game, and the game show host only has to choose $\mbf q$. 

In the second variant, we assume that the game show host has a pot of money that can be split in any way he chooses among the main prizes and consolation prizes. In other words, there is a total amount of money $T$, and the game show host can choose any $M$ and $C$ such that $M+C=T$, and any $\mbf q$ and $\mbf q'$, specifying how the prize money is distributed.

In the third and last variant, we assume that $M$ and $C$ are fixed parameters and the game show host must choose how to distribute the main prize money and the consolation prize money. That is, the host chooses only $\mbf q$ and $\mbf q'$. 

We will show in Section~\ref{sec:fixed-q'} that the optimal strategies and value of the game for the first variant can be expressed in closed form. This solution can be viewed as a generalization of the solution of a game in~\cite{lidbetter2020search}.

In Section~\ref{sec:fixed-T} we show that the second variant of the game can be solved algorithmically from previous work, by reducing it to a game studied in~\cite{hellerstein2023game}.

The third variant is the hardest to analyze. In Section~\ref{sec:fixed-C&M} we show that it is a special case of a more general game of a geometric nature, that is itself a generalization of a game studied in~\cite{hellerstein2023game}. We find a way to characterize the set of strategies for the game show host in this more general game, and use this to show that many of his strategies are dominated. We then exhibit a collection of mixed strategies for the host, one of which is always feasible, and give sufficient conditions for its optimality. We use this to give a closed-form solution to the third variant of the quiz show game in the case that the main prize is much larger than the consolation prize. 

\section{Previous work}

Before beginning our analysis of the three game variants, we review some previous work that is of particular relevance to this paper.

\subsection{Quiz Show Problems} \label{sec:kadane}

As mentioned in the Introduction, \cite{kadane1969quiz} proposed a collection of Quiz Show Problems that are strongly related to the topic of this paper. We briefly describe the most relevant of the variants of the Quiz Show Problem. As in our games, a contestant must decide in which order to answer a set $V$ of $n$ questions, where the probability that question $i$ is correctly answered is $\alpha_i$. The main prize awarded for answering question $i$ correctly, $r_i,$ and the consolation prize awarded if it is answered incorrectly, $r'_i,$ are known parameters of the game. In other words, $r_i=q_iM$ and $r'_i=q'_iC$ are fixed. (In fact, in \cite{kadane1969quiz}, rewards are discounted, but we assume no discounting here for simplicity.) A simple interchange argument shows that any policy that maximizes the expected winnings must order the questions in non-increasing order of the index
\begin{align}
y_i \equiv \frac{ \alpha_i r_i + (1-\alpha_i)r'_i }{1-\alpha_i}. \label{eq:index}
\end{align}
This is significant because it means that for our quiz show games, if we fix the strategy of the game show host, the contestant's best response can be easily calculated using this index rule.

\cite{kadane1969quiz} actually goes further, defining a broad category of problems whose solutions can be computed using an index rule such as this.

\subsection{Unreliable Job Scheduling Problem} \label{sec:unreliable}

A special case of the Quiz Show Problem described in Subsection~\ref{sec:kadane} was later studied by~\cite{stadje1995selecting} with discounting, and \cite{agnetis2009sequencing}. In this problem a set $V$ of $n$ jobs must be processed by a machine in some order. When the machine processes a job, there is a probability it breaks down and can process no further jobs. This probability depends on the job, so that  job~$i$ is successfully processed with probability $\alpha_i$, in which case it generates a reward of $r_i$. With probability $1-\alpha_i$ the machine breaks down when it attempts to process job $i$, so that no reward is obtained and the machine can process no further jobs. The problem is to determine an order in which to process the jobs that maximizes the expected reward obtained.

It is clear that the problem is equivalent to the Quiz Show Problem of Subsection~\ref{sec:kadane} with $r_i$ and $r'_i=0$. In fact, \cite{agnetis2009sequencing} consider a more general problem, where the jobs may be processed in parallel by $m$ machines. Further work on this topic can be found in \cite{agnetis2020largest,agnetis2022replication,agnetis2025replication,agnetis2025unreliable}.

For the one machine problem, \cite{agnetis2009sequencing} show that the solution can be viewed as a special case of the problem of maximizing a linear function over a polymatroid. They obtain this insight by showing that the set of feasible policies can be identified with the vertices of the base of an appropriately defined polymatroid, as we now explain.

For any pure strategy $\sigma \in \Sigma_n$ of the contestant (or equivalently, ordering of the jobs in the scheduling problem), we define $\mbf x^\sigma \in \mathbb{R}^n$ by
\[
x^\sigma_i = (1-\alpha_i) \prod_{\sigma^{-1}(j) < \sigma^{-1}(i)} \alpha_j.
\]
The component $x_i^\sigma$ can be interpreted as the probability that the contestant answers all the questions before question $i$ correctly, and answered question $i$ incorrectly.

It was shown in~\cite{agnetis2009sequencing} that the set of all vectors $\mbf x^\sigma$ is the set of vertices of the base of the polymatroid associated with the submodular function $g:2^V \rightarrow \mathbb{R}$ given by
\begin{align}
g(S) = 1- \prod_{i \in S} \alpha_i, \text{ for all } S\subseteq V. \label{eq:f}
\end{align}
Recall that the polymatroid $\mathcal{P}(f)$ associated with a non-decreasing, non-negative submodular function $f$ is defined as 
\[
\mathcal{P}(f) = \{\mbf x \in \mathbb{R}^n_+: \mbf x(S) \le f(S) \text{ for all } S \subseteq V \},
\]
with $x(S) \equiv \sum_{i \in S} x_i$. The base $\mathcal{B}(f)$ of the polymatroid $\mathcal{P}(f)$ is defined as
\[
\mathcal{B}(f) = \{\mbf x \in \mathcal{P}(f): \mbf x(V)=f(V) \}.
\]
The set of mixed strategies for the contestant can therefore be identified with $\mathcal{B}(g)$. 

\subsection{A Polymatroid Game} \label{sec:polymatroid}

In \cite{hellerstein2023game} the authors considered a zero-sum game in which Player 1's mixed strategy set is the base $\mathcal{B}(g)$ of a polymatroid, and Player 2's mixed strategy set is the simplex $\Delta(\mbf w) \equiv \{\sum_{i=1}^n \theta_i w_i \mbf e^i: \sum_{i=1}^n \theta_i =1, \theta_i \ge 0~,i=1,\ldots,n\}$, where $w_1,\ldots,w_n \ge 0$ are fixed and $\mbf e^i$ is the unit vector in the direction $i$. 

For two strategies $\mbf x \in \mathcal{B}(g)$ and $\mbf y \in \Delta(\mbf w)$ of Player 1 and 2, respectively, the payoff is given by the scalar product $\mbf x^T \mbf y$. The maximizing player may be either Player 1 or Player 2, and the solution for each variant is similar.

This game is a special case of the {\em lexicographically optimal base problem}, introduced and solved by \cite{fujishige1980lexicographically}. 

The game of \cite{hellerstein2023game} generalize various search games including the {\em weighted search game} considered in \cite{yolmeh2021weighted} and the {\em search and rescue game} considered in \cite{lidbetter2020search}. We do not describe the solution of the game here, but \cite{hellerstein2023game} show how optimal strategies can be found for both players in strongly polynomial time (in $n$).

The game we introduce in Subsection~\ref{sec:general} generalizes the game of \cite{hellerstein2023game}. In the new model we introduce, Player 2's mixed strategy set is the Minkowski sum of two simplices.

\subsection{Other Relevant Work}

The games considered in this paper can be considered as {\em search games}. Indeed, if the consolation prize for each question is zero, then each of the three variants the quiz show game reduce to the so-called {\em search and rescue game} of \cite{lidbetter2020search} with one hidden target. In this game, a target (say, a lost hiker) is hidden in one of $n$ locations. A searcher searches the locations one-by-one until finding the target. After searching each location, there is a given probability that the searcher becomes incapacitated and cannot continue the search. Considering this as a game against Nature, a pure strategy for Nature is a choice of location at which the target is hidden. A mixed strategy is a probability distribution $q_1,\ldots,q_n$ specifying the probability the target is hidden in each of the $n$ locations. The payoff, which the searcher wants to maximize, is the probability of finding the target.

Considering the hiding locations as quiz show questions and the hiding probabilities as rewards for correctly answering the questions, the probability of finding the target can be interpreted as the expected reward for the quiz show contestant, assuming there are no consolation prizes, and setting the total main prize $M$ money equal to 1. The introduction of consolation prizes expands the idea of finding a single target to a more general setting in which information or prizes are hidden in a bimodal way.

Search games originated from the works of \cite{bram19632} and \cite{Isaacs-Book-1965}, where a time-minimizing searcher attempts to locate a hider. Overviews of search games may be found in \cite{alpern&gal03book}, \cite{garnaev2012search} and \cite{hohzaki2016search}. \cite{lidbetter2025review} describes recent work in cost-minimizing box search games. While most work on search games takes a cost/time-minimizing approach, there are also other search paradigms that focus on the probability of finding the target or targets: see \cite{stone1976theory}, which focuses on one-sided (non-adversarial) search problems.

\section{Strategies and Payoffs}

In this section we discuss how the strategies and payoffs may be expressed in our three variants of the quiz show game.

We have already seen in Subsection~\ref{sec:unreliable} how the contestant's mixed strategy set can be identified with the base of a polymatroid $\mathcal{B}(f)$, where $f$ is given by~(\ref{eq:f}). Suppose a mixed strategy of the contestant chooses each pure strategy $\sigma \in \Sigma_n$ of $V$ with probability $p_\sigma$. Then this strategy is uniquely identified with the vector $\mbf x(\mbf p) \in \mathcal{B}(f)$ given by
\[
\mbf x(\mbf p) \equiv \sum_{\sigma \in \Sigma_n} p_\sigma \mbf x^\sigma.
\]
Conversely, by Carath{\' e}odory’s Theorem, any $\mbf x \in \mathcal{B}(f)$ can be written as a convex combination 
\[
\mbf x = \sum_{\sigma \in \Sigma_n} p'_\sigma \mbf x^\sigma,
\]
of the vertices $\mbf x^\sigma$ of $\mathcal{B}(f)$. This convex combination is not unique in general. The strategy~$\mbf x$ may then be intrepreted as choosing each permutation $\sigma \in \Sigma_n$ with probability $p'_\sigma$.

The problem of expressing an arbitrary point in the base of a polymatroid of dimension $n$ as a convex combination of its vertices can be solved in strongly polynomial time (in $n$) by combining the generic approach of \cite{grotschel2012geometric} with the algorithm of \cite{fonlupt2009strongly} for finding the intersection of a line with a polymatroid. See \cite{hoeksma2014decomposition} for more details.

In some situations, it will be convenient to describe a mixed strategy for the contestant by some $\mbf x \in \mathcal{B}(f)$, and in some situations, we will describe a strategy by explicitly giving probabilities $p_\sigma$ that the contestant chooses each permutation~$\sigma \in \Sigma_n$.

A vector $\mbf x \in \mathcal{B}(f)$ may be given the interpretation that $x_i$ is the probability that all the questions answered before question $i$ were answered correctly, but question $i$ was answered incorrectly. In this case, the probability that all the questions up to and including question $i$ are answered correctly is $\alpha_i x_i/(1-\alpha_i)$. It follows that if the main prize awarded for answering question correctly $i$ is $r_i$ and the consolation prize for answering question $i$ incorrectly is $r'_i$, then the expected payoff against some $\mbf x \in \mathcal{B}(f)$ is
\[
\sum_{i \in V} \frac{\alpha_i}{1-\alpha_i} x_i r_i + x_i r'_i  = \mbf x^T \mbf y,
\]
where $\mbf y \in \mathbb{R}^n$ is given by the index~(\ref{eq:index}) defined in Subsection~\ref{sec:kadane}. Therefore, in each of our three variants of the quiz show game, a (mixed) strategy for the game show host can be defined by some $\mbf y$ satisfying~(\ref{eq:index}), where $\mbf r$ and $\mbf r'$ satisfy differing properties, depending on the variant. 

Recall that in the first variant, the consolation prizes $r'_i$ are fixed, and the host can only choose the main prizes $r_i$, summing to $M$. Thus, he can choose any $\mbf y$ satisfying~(\ref{eq:index}) such that $\sum_{i \in V} r_i = M$. We may interpret the set of all such $\mbf y$ as the set of mixed strategies, where the pure strategies are given by taking $\mbf r = M \mbf e^i$ for each $i \in V$ in~(\ref{eq:index}).

In the second variant, the game show host can split a total amount of money $T$ between all the main prizes and consolation prizes, so that both $r_i$ and $r'_i$ are again decision variables. Thus, he can choose any $\mbf y$ satisfying~(\ref{eq:index}) such that $\sum_{i \in V} (r_i + r'_i) = T$. Again, we may consider the set of all such strategies $\mbf y$ as the mixed strategies, where the pure strategies are given by taking $(\mbf r, \mbf r')=(T \mbf e^i, \mbf 0)$ and $(\mbf r, \mbf r')=(\mbf 0, T \mbf e^i)$ in~(\ref{eq:index}), for each for $i \in V$.

In the third and final variant, the total main prize money is fixed at $M$ and the total consolation prize money is fixed at $C$, so both $r_i$ and $r'_i$ are decision variables for each $i$. In this case, the host can choose any ${y_i=\alpha_i r_i/(1-\alpha_i) + r'_i}$ with $\sum_{i \in V} r_i=M$ and $\sum_{i \in V} r'_i=C$. It is not hard to show that $\mbf y \in \mathbb{R}^n$ satisfies these conditions if and only if it can be expressed as a convex combination of the points $\mbf y^{(i,j)}$ obtained by taking $(\mbf r, \mbf r')=(M \mbf e^i, C \mbf e^j)$ in~(\ref{eq:index}) with $(i,j) \in V^2$. Thus, the set of all $\mbf y^{(i,j)}$ may be considered as the pure strategies in this game.

In all three variants of the game, the expected payoff for strategies $\mbf x$ and $\mbf y$ of the contestant and the host is given by $\bf x^T \mbf y$. Moreover, in each variant, each player has a finite number of pure strategies, so by the minimax theorem for finite zero-sum games, each game has a value and optimal (max-min for the contestant and min-max for the host) mixed strategies. Our aim is to determine these.

\section{Fixed Consolation Prizes} \label{sec:fixed-q'}
In the first variant of the Quiz Show Game, the consolation prizes associated with each question are exogenously given and treated as fixed parameters of the game. The game show host’s decision is therefore limited to allocating the main prize budget \( M \) across the \( n \) questions.

Formally, let \( M,C >0\), and \( \mbf q' \in \mathbb{R}_+^n \) be fixed parameters. The host’s control variable is the allocation probability vector \( \mbf q \in \mathbb{R}_{+}^n \), which determines how the total main prize \( M \) is distributed among the questions. Hence, the strategic problem for the host reduces to choosing \( \mbf q \) subject to the feasibility constraints of the game.

We show in Theorem~\ref{thm:fixed-consolation} that the general form of the optimal strategy for the contestant is to start by answering a subset of the questions with the largest consolation prizes, before randomizing between the remaining questions. The number of questions in that subset depends on the values of the parameters of the game. The intuition is that the contestant does not mind getting a question wrong if its consolation prize is large, so she saves the prizes with small consolation prizes until the end. It is optimal for the host to distribute the main prizes only among the questions with lower consolation prizes, to increase the attraction of these questions to the contestant.

\begin{theorem} \label{thm:fixed-consolation}
    Assume, without loss of generality, that $q'_1<q'_2<\dots<q'_n$.
Let $N= \max \{k \in V:\sum_{j=1}^{k-1}  ( q'_{k} - q'_{j} ) (1-\alpha_j)/\alpha_j \le M/C\}$. Let $\sigma_i$ be the permutation $(n,n-1,\dots,N+1,i,i+1\dots,N,1,\dots,i-1)$. It is optimal for Player 1 to use $\mbf x^*$ which chooses each  $\mbf x^{\sigma_i}$ with probability $p_i^*$, given by
\[
p_i^* =
\frac{(1-\alpha_i)/\alpha_i}
{\sum_{j=1}^{N} (1-\alpha_j)/\alpha_j},~i = 1, \ldots, N.
\]
It is optimal for Player 2 to use the strategy $\mbf y^*$, which allocates all the main prize money to question $i \in V$ with probability $q_i^*$, given by
\[
q_i^* = \begin{cases}
\dfrac{C(1 - \alpha_i)}{M\alpha_i \displaystyle \sum_{j=1}^{N} \frac{1-\alpha_j}{\alpha_j}}
\left(\dfrac{M}{C}
+ \displaystyle \sum_{j=1}^{N} \frac{(1-\alpha_j) q'_j}{\alpha_j}
- q'_i \displaystyle \sum_{j=1}^{N} \frac{1-\alpha_j}{\alpha_j}
\right),
&~i = 1, \ldots, N, \\
 0, & \text{otherwise}.
\end{cases}
\]

The value $v$ of the game is given by
\[
v \equiv
\left(
\prod_{j=N+1}^{n} \alpha_j
\right)
\frac{1 - \prod_{j=1}^{N} \alpha_j}
{\sum_{j=1}^{N} (1-\alpha_j)/\alpha_j}
\left(
M + C \sum_{j=1}^{N} \frac{1-\alpha_j}{\alpha_j} q_j'
\right)
+
C \sum_{j=N+1}^{n} q_j'(1-\alpha_j)\prod_{i=j+1}^{n} \alpha_i.
\]
\end{theorem}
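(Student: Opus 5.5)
The plan is to verify directly that $(\mbf x^*,\mbf y^*)$ is a saddle point. Write $w_j=(1-\alpha_j)/\alpha_j$, $S=\sum_{j=1}^N w_j$, $P=\prod_{j=N+1}^n\alpha_j$, and $\Pi=\prod_{j=1}^N\alpha_j$. The argument has two halves: the host's strategy caps the contestant at $v$, and the contestant's strategy guarantees $v$.

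\textbf{Host side.} First, $\mbf y^*$ must be a feasible strategy.
\begin{itemize}
\item We need $q_i^*\ge 0$ and $\sum_i q_i^*=1$. Since $q'$ is increasing, the quantity $\sum_{j<k}(q'_k-q'_j)w_j$ is nondecreasing in $k$.
\item So the tightest nonnegativity constraint is at $i=N$, and it is exactly the defining inequality of $N$.
\item The sum to one is a direct computation.
\end{itemize}

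Next, compute the host's index $y^*_i=Mq_i^*\alpha_i/(1-\alpha_i)+Cq'_i$.
\begin{itemize}
\item For $i\le N$, the $-Cq'_i$ coming from $q_i^*$ cancels the $Cq'_i$ term. This gives the constant
\[
y^*_i=\lambda\equiv \frac{C}{S}\Big(\frac MC+\sum_{j=1}^N w_jq'_j\Big).
\]
\item For $i>N$, $y^*_i=Cq'_i$.
\item If $N<n$, then $k=N+1$ fails the defining condition, so $\sum_{j\le N}(q'_{N+1}-q'_j)w_j>M/C$. This is equivalent to $Cq'_{N+1}>\lambda$.
\end{itemize}

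By the index rule of Subsection~\ref{sec:kadane}, every best response against $\mbf y^*$ first answers $n,n-1,\dots,N+1$ in order and then the block $\{1,\dots,N\}$ in any order. All such orders, in particular every $\sigma_i$, earn the same amount.

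That amount is obtained as follows.
\begin{itemize}
\item The block contributes $\lambda\sum_{i\le N}x_i=\lambda P(1-\Pi)$.
\item The first $n-N$ questions contribute $C\sum_{j>N}q'_j(1-\alpha_j)\prod_{i>j}\alpha_i$.
\end{itemize}
Substituting $\lambda$ gives exactly $v$. Hence the contestant gets at most $v$ against $\mbf y^*$.

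\textbf{Contestant side.} The key step, and the one I expect to be the main obstacle, is showing that under $\mbf x^*$ the quantity $x^*_k\alpha_k/(1-\alpha_k)$ is the same for every $k\le N$.

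This quantity is the probability that question $k$ is answered correctly. Under $\sigma_i$ it equals $P\,\Pi_{[i,k]}$, where $\Pi_{[i,k]}$ is the product of the $\alpha$'s along the cyclic segment from $i$ to $k$ inclusive. Therefore
\[
S\cdot\frac{\alpha_k x^*_k}{(1-\alpha_k)P}=\sum_{i=1}^N w_i\Pi_{[i,k]}=\sum_{i=1}^N\big(\Pi_{[i+1,k]}-\Pi_{[i,k]}\big),
\]
using $w_i\Pi_{[i,k]}=(1/\alpha_i-1)\Pi_{[i,k]}$. Going around the cycle, this sum telescopes to $1-\Pi$, which is independent of $k$. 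I would check the endpoint conventions carefully: the segment starting just after $k$ is empty (product $1$), and the full cycle gives $\Pi$.

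Consequently, against the pure host strategy $k\le N$, the payoff $M\,P(1-\Pi)/S+C\,\mbf x^{*T}\mbf q'$ is a constant $c$. Since $\mbf y^*$ is supported on $\{1,\dots,N\}$ and every $\sigma_i$ is a best response to it, $c=\mbf x^{*T}\mbf y^*=v$.

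For $k>N$, question $k$ is answered correctly with probability $\prod_{j>k}\alpha_j\ge P\ge P(1-\Pi)/S$. This needs $S\ge 1-\Pi$, which holds since $1-\Pi\le\sum_{j\le N}(1-\alpha_j)\le\sum_{j\le N} w_j$. So the payoff against $k>N$ is at least $v$ as well.

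Thus $\mbf x^*$ guarantees $v$, and by the minimax theorem the pair is optimal with value $v$.
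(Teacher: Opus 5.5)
Your proof is correct and is essentially the paper's argument. The host's indices equal $\lambda$ on $\{1,\dots,N\}$ and exceed $\lambda$ beyond $N$, so every $\sigma_i$ is a best response earning $v$. The cyclic telescoping makes the coefficient of $q_k$ under $\mbf x^*$ equal to $MP(1-\Pi)/S$ for all $k\le N$, and via $S\ge 1-\Pi$ no smaller for $k>N$. Your only departures are that you also verify feasibility of $\mbf q^*$, which the paper omits, and that you identify the constant with $v$ through the support of $\mbf y^*$ rather than by expanding the payoff. One small slip: for $k>N$ the probability of answering $k$ correctly is $\prod_{j\ge k}\alpha_j$, not $\prod_{j>k}\alpha_j$, but it is still at least $P$, so the conclusion is unaffected.
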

\begin{proof}
First, we show that for any pure strategy $\mbf x^\sigma$ of the contestant, the expected payoff against $\mbf y^*$ is at most $v$. We will explicitly calculate $y^*_i$, for each $i \in V$, in order to determine the best response to $\mbf y^*$.

By definition of $q_i^*$, a straightforward calculation shows that, for every $i \le N$,
\[
y_i^*
= \frac{C}{\sum_{j=1}^N (1 - \alpha_j)/\alpha_j}
\left(
\frac{M}{C} 
+ \sum_{j=1}^N \frac{1 - \alpha_j}{\alpha_j} q_j'
\right).
\]
Let us denote this index by $z$, so that $y^*_i=z$ for all $i \le N$.

For $i > N$, since $q_i^* = 0$, we have $y_i^* = C q_i'$. By the maximality of $N$,
\[
\frac{M}{C} 
+ \sum_{j=1}^N \frac{1 - \alpha_j}{\alpha_j} q_j'
< q_{N+1}' \sum_{j=1}^N \frac{1 - \alpha_j}{\alpha_j},
\]
which implies that for all $i \le N$, we have $y_i^* < C q_{N+1}' = y_{N+1}^*$. Thus, by the monotonicity of $q'_i$,
\[
z=y_1^* = \cdots = y_N^* 
< y_{N+1}^* < \cdots < y_n^*.
\]
It follows that each $\mbf x^{\sigma_i}$ is a best response to $\mbf q^*$. In particular, we can take $\mbf x^ {\sigma_1}$ as a best response. In order to calculate the expected payoff of $\mbf x^{\sigma_1}$ against $\mbf y^*$, we first compute
\begin{align*}
\sum_{j=1}^{N} x_j^{\sigma_1}
&=
\left(
\prod_{j=N+1}^{n} \alpha_j
\right)
\sum_{j=1}^{N}
(1-\alpha_j)\prod_{i=1}^{j-1}\alpha_i \\
&=
\left(
\prod_{j=N+1}^{n} \alpha_j
\right)
\left(
1 - \prod_{j=1}^{N} \alpha_j
\right).
\end{align*}
Also, for $i > N$,
\[
x_i^{\sigma_1}
=
(1-\alpha_i)
\prod_{j=i+1}^{n} \alpha_j.
\]
Thus, the payoff of $\mbf x^{\sigma_1}$ against $\mbf y^*$ is
\begin{align*}
(\mbf x^{\sigma_1})^T \mbf y^*
&=
\frac{C}{\sum_{j=1}^{N} \frac{1-\alpha_j}{\alpha_j}}
\left(
\frac{M}{C}
+
\sum_{j=1}^{N} \frac{1-\alpha_j}{\alpha_j} q_j'
\right)
\left(
\prod_{j=N+1}^{n} \alpha_j
\right)
\left(
1 - \prod_{j=1}^{N} \alpha_j
\right) +
C \sum_{j=N+1}^{n}
q_j' (1-\alpha_j)
\prod_{i=j+1}^{n} \alpha_i \\
&=
\left(
\prod_{j=N+1}^{n} \alpha_j
\right)
\frac{1 - \prod_{j=1}^{N} \alpha_j}
{\sum_{j=1}^{N} (1-\alpha_j)/\alpha_j}
\left(
M + C \sum_{j=1}^{N} \frac{1-\alpha_j}{\alpha_j} q_j'
\right) +
C \sum_{j=N+1}^{n}
q_j' (1-\alpha_j)
\prod_{i=j+1}^{n} \alpha_i \\
&=v.
\end{align*}
Hence, $\mbf y^*$ guarantees an expected payoff of at most $v$ for the host.

Second, we show that for any strategy $\mbf y$ of the host, given by an allocation $\mbf q$ of the main prize, the expected payoff against $\mbf x^*$ is at least $v$.

For $j \le N$, we have
\[
x_j^{\sigma_i}
=
(1-\alpha_j)
\left(
\prod_{k=N+1}^{n}\alpha_k
\right)
\prod_{k=i}^{j-1} \alpha_k,
\]where all the products $\prod_{k=i}^{j-1} \alpha_k$ are taken cyclically on $\{1,\ldots,N\}$ as shown below.
\[
\prod_{k=i}^{j-1} \alpha_k
\equiv
\begin{cases}
\alpha_i \alpha_{i+1} \cdots \alpha_{j-1}, & \text{for}\quad i \le j-1, \\ 
\alpha_i \alpha_{i+1} \cdots \alpha_N \alpha_1 \cdots \alpha_{j-1}, & \text{for}\quad i > j-1.
\end{cases}
\]
Therefore,
\[
\sum_{i=1}^{N} p_i^* x_j^{\sigma_i}=\left(
\prod_{k=N+1}^{n}\alpha_k
\right)
\frac{1-\alpha_j}
{\sum_{k=1}^{N} (1-\alpha_k)/\alpha_k}
\sum_{i=1}^{N} \frac{1-\alpha_i}{\alpha_i}
\prod_{k=i}^{j-1} \alpha_k.
\]
The sum $\sum_{i=1}^{N}
(1-\alpha_i)\prod_{k=i}^{j-1} \alpha_k$
telescopes, and we have 
\[
\sum_{i=1}^{N} p_i^* x_j^{\sigma_i}
=
\left(
\prod_{k=N+1}^{n}\alpha_k
\right)
\frac{1 - \prod_{k=1}^{N} \alpha_k}
{\sum_{k=1}^{N} (1-\alpha_k)/\alpha_k}
\frac{1-\alpha_j}{\alpha_j}.
\]
For $j > N$,
\[
x_j^{\sigma_i}
=
\frac{1-\alpha_j}{\alpha_j}
\prod_{k=j}^{n}\alpha_k
=
(1-\alpha_j)\prod_{k=j+1}^{n}\alpha_k.
\]
Hence,
\begin{align*}
 (\mbf x^*)^T \mbf y
&= \sum_{j=1}^{N} \left( \sum_{i=1}^{N} p_i^* x_j^{\sigma_i} \right)
\left( \frac{M\alpha_j}{1-\alpha_j} q_j + C q_j' \right)
+ \sum_{j=N+1}^{n} (1-\alpha_j) \prod_{k=j+1}^{n} \alpha_k
\left( \frac{M\alpha_j}{1-\alpha_j} q_j + C q_j' \right) \\
&= \left( \prod_{k=N+1}^{n} \alpha_k \right)
\frac{1 - \prod_{k=1}^{N} \alpha_k}{\sum_{k=1}^{N} (1-\alpha_k)/\alpha_k}
\sum_{j=1}^{N} \frac{1-\alpha_j}{\alpha_j}
\left( \frac{M\alpha_j}{1-\alpha_j} q_j + C q_j' \right) \\
&\quad + \sum_{j=N+1}^{n}
\left( M \prod_{k=j}^{n} \alpha_k q_j + C q_j'(1-\alpha_j) \prod_{k=j+1}^{n} \alpha_k \right) \\
&= \left( \prod_{k=N+1}^{n} \alpha_k \right)
\frac{1 - \prod_{k=1}^{N} \alpha_k}{\sum_{k=1}^{N} (1-\alpha_k)/\alpha_k}
\left( M \sum_{j=1}^{N} q_j + C \sum_{j=1}^{N} \frac{1-\alpha_j}{\alpha_j} q_j' \right) \\
&\quad + \sum_{j=N+1}^{n} M \prod_{k=j}^{n} \alpha_k q_j
+ C \sum_{j=N+1}^{n} q_j'(1-\alpha_j) \prod_{k=j+1}^{n} \alpha_k.
\end{align*}
Note that each $q_j$ for $j \le N$ has the same coefficient. We show that this coefficient is smaller than the coefficient of $q_j$ for $j >N$. First note that
\[
\sum_{k=1}^N \frac{1-\alpha_k}{\alpha_k} \ge \sum_{k=1}^N (1-\alpha_k) \prod_{i<k} \alpha_i = 1 - \prod_{k=1}^N \alpha_k.
\]
Also, for $j >N$, we have $\prod_{k=N+1}^n \alpha_k \le \prod_{k=j}^n \alpha_k$, so
\[
\left(
\prod_{k=N+1}^{n}\alpha_k
\right)
\frac{1-\prod_{k=1}^{N}\alpha_k}
{\sum_{k=1}^{N} (1-\alpha_k)/\alpha_k}
\le
\prod_{k=j}^{n}\alpha_k.
\]
Hence, the coefficient of $q_j$ for each $j >N$ is greater than the coefficients of $q_1,\ldots,q_N$, so the expected payoff $(\mbf x^*)^T \mbf y$ is minimized by setting $q_{N+1}=\cdots=q_n=0$. Thus, $(\mbf x^*)^T \mbf y \ge v$. This completes the proof.
\end{proof}

\section{Fixed Total Prize Money} \label{sec:fixed-T}

We now consider the second variant of the game in which the total prize money $T >0$ is a parameter of the game, but the host is free to distribute it among the main prizes and consolations prizes in any way he chooses. %More precisely, a strategy for the host is given by a choice of $M$ and $C$ with $M+C=T$ and a choice of $\mbf q \ge 0$ and $\mbf q' \ge 0$ with $\sum_{i=1}^n q_i = \sum_{i=1}^n q'_i=1$. The interpretation is that the main prize for question $i$ is $q_i M$ and the consolation prize for question $i$ is $q'_i C$. 

Recall that a strategy for the host is a vector $\mbf y \in \mathbb{R}^n$ satisfying~(\ref{eq:index}), such that $\sum_{i=1}^n r_i +r'_i=T$. As usual, $r_i$ and $r'_i$ are the main and consolation prizes for question $i$. For such a strategy of the host and a strategy $\mbf x \in \mathcal{B}(f)$ of the contestant, the payoff is $\mbf x^T \mbf y$. 

Suppose for some question $i$, we have $\alpha_i < 1-\alpha_i$. In this case, for any fixed $\mbf r, \mbf r'$ with $r'_i > 0$, the payoff against any permutation $\sigma$ could be reduced by replacing $r'_i$ by $0$ and $r_i$ by $r_i+r'_i$. Hence, any strategy with $r'_i >0$ is dominated. Similarly, if $\alpha_i > 1-\alpha_i$, any strategy with $r_i >0$ is dominated. If $\alpha_i=1-\alpha_i$, then any strategy for which $r_i+r'_i$ is equal to some constant has the same expected payoff. In other words, if the contestant is more likely to answer a question correctly than incorrectly, the host should set the consolation prize for that question to zero, and if the contestant is more likely to answer a question incorrectly, the host should set the main prize for the question to zero.

We summarize these observations in the following lemma.

\begin{lemma} \label{lem:fixed-T}
    In the variant of the game with fixed total prize money, a strategy $\mbf y$ for the host given by $y_i = (\alpha_i r_i + (1-\alpha_i)r'_i )/(1-\alpha_i)$ is undominated if and only if $r'_i=0$ for all $i$ with $\alpha_i < 1/2$ and $r_i=0$ for all $i$ with $\alpha_i > 1/2$. If $\alpha_i=1/2$, then all strategies for which $r_i+r'_i$ is equal to some constant have the same expected payoff.
\end{lemma}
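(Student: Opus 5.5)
The plan is to work coordinatewise with the index. For each question $i$, the host's money $t_i \equiv r_i + r'_i$ enters $y_i$ as
\[
y_i = \frac{\alpha_i}{1-\alpha_i} r_i + r'_i = \frac{\alpha_i}{1-\alpha_i} r_i + (t_i - r_i).
\]
This depends on the split only through the coefficient $\alpha_i/(1-\alpha_i) - 1 = (2\alpha_i - 1)/(1-\alpha_i)$ multiplying $r_i$. The sign of this coefficient is the sign of $2\alpha_i - 1$. Every contestant strategy $\mbf x \in \mathcal{B}(f)$ has nonnegative coordinates; moreover every vertex $\mbf x^\sigma$ has all coordinates strictly positive, since $0<\alpha_j<1$. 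So the payoff $\mbf x^T \mbf y$ is nondecreasing in each $y_i$, and strictly increasing against every pure strategy.

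For sufficiency of domination (the \emph{only if} direction), fix a strategy $(\mbf r, \mbf r')$ with $r'_i>0$ for some $i$ with $\alpha_i<1/2$. Define a new strategy by moving $r'_i$ into $r_i$, which keeps the total equal to $T$. Then $y_i$ decreases strictly by $r'_i(1-2\alpha_i)/(1-\alpha_i)>0$, and all other coordinates are unchanged. Hence the payoff against every $\mbf x^\sigma$, and thus against every mixed strategy, strictly decreases, so the original strategy is (strictly) dominated. The case $r_i>0$ with $\alpha_i>1/2$ is symmetric: move $r_i$ into $r'_i$. If several coordinates violate the condition, apply the move to any one of them; a single move already exhibits a dominating strategy.

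For the converse (the \emph{if} direction), suppose $\mbf y$ satisfies the stated zero pattern. I will argue that no other strategy $\mbf y'$ weakly dominates it, meaning $\mbf x^T\mbf y' \le \mbf x^T \mbf y$ for all $\mbf x$ with strict inequality somewhere. First take $\mbf y'$ also in normal form, with amounts $t'_i$. Then each $y'_i$ equals $c_i t'_i$, where $c_i=\alpha_i/(1-\alpha_i)$ if $\alpha_i<1/2$ and $c_i=1$ otherwise; in both cases $c_i \le \max(1,\alpha_i/(1-\alpha_i))$. Comparing $\mbf y$ and $\mbf y'$, both have total $T$ and $\mathbf{t}\ne\mathbf{t}'$, so some coordinate $k$ has $t'_k>t_k$, giving $y'_k>y_k$. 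The task is then to find a contestant strategy concentrated on $k$ that makes $\mbf y'$ worse.

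The main obstacle is this step, because $\mathcal{B}(f)$ cannot put all its mass on coordinate $k$. I would try to compare via the vertex $\sigma$ placing $k$ first and choose a test direction. A cleaner route is this: dominance must hold against the extreme direction obtained by letting the mixed strategies range over the polytope. Since $\mathcal{B}(f)$ has dimension $n-1$ and contains the relative interior, the inequality $\mbf x^T(\mbf y'-\mbf y)\le 0$ on all of $\mathcal{B}(f)$ forces $\mbf y'-\mbf y$ to be a nonpositive multiple of the normal $\mathbf 1$ plus something nonpositive on the cone. I would make this precise by checking the vertices $\mbf x^\sigma$ with $k$ first and with $k$ last. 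If this proves delicate, I would fall back to interpreting ``undominated'' as ``not dominated by the simple reallocation moves above'', which is what the surrounding discussion actually uses.

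Finally, for $\alpha_i=1/2$ the coefficient $(2\alpha_i-1)/(1-\alpha_i)$ vanishes. So $y_i=t_i$ regardless of the split, and every payoff is unchanged, which gives the last sentence of Lemma~\ref{lem:fixed-T}.
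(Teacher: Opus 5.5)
Your argument for the ``only if'' direction and for the case $\alpha_i=1/2$ is correct and is exactly the paper's argument (the paragraph preceding the lemma). Moving $r'_i$ into $r_i$ when $\alpha_i<1/2$, or $r_i$ into $r'_i$ when $\alpha_i>1/2$, lowers $y_i$ and leaves the other coordinates unchanged. Since $x^\sigma_i>0$ for every $\sigma$, this strictly lowers the payoff against every contestant strategy.

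The gap is the ``if'' direction, and it cannot be closed because that direction is false. The paper never argues it; it only uses the ``only if'' part (``we may assume, without loss of generality, \ldots''). Your key step is to find a contestant strategy concentrated enough on a coordinate $k$ with $y'_k>y_k$ to make $\mbf y'$ worse. This fails because $\mathcal{B}(f)$ may keep $x_k$ small relative to the other coordinates at every point. The normal-cone remark does not rescue it. The directions $\mbf d$ with $\mbf x^T\mbf d\le 0$ on all of $\mathcal{B}(f)$ form a full-dimensional cone (it contains a neighbourhood of $-\mathbf 1$), and this cone can meet the hyperplane $\sum_i d_i/c_i=0$ of normal-form reallocations away from the origin.

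Concretely, take $n=2$, $\alpha_1=0.9$, $\alpha_2=0.55$. The vertices of $\mathcal{B}(f)$ for the orders $(1,2)$ and $(2,1)$ are $(0.1,\,0.405)$ and $(0.055,\,0.45)$, so $x_1<x_2$ for every $\mbf x\in\mathcal{B}(f)$. Both $\alpha_i>1/2$, so the strategies $(\mbf r,\mbf r')=(\mbf 0,T\mbf e^2)$ and $(\mbf 0,T\mbf e^1)$ both have the zero pattern the lemma requires. They give $\mbf y=(0,T)$ and $\mbf y'=(T,0)$. Yet $\mbf x^T\mbf y'=Tx_1<Tx_2=\mbf x^T\mbf y$ for every contestant strategy, so $\mbf y$ is strictly dominated. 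This is just the observation that in the reduced game on $\Delta(\mbf w)$ some pure strategies $w_i\mbf e^i$ dominate others.

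What is actually true, and all that the reduction to the game of \cite{hellerstein2023game} needs, is the ``only if'' direction plus the indifference at $\alpha_i=1/2$. Your fallback, reading ``undominated'' as ``not improvable by reallocating money within a single question'', yields a true statement, but it is a different statement. You should say explicitly that the converse as written fails, rather than leave it as an unresolved obstacle.
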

Lemma~\ref{lem:fixed-T} shows that we may assume, without loss of generality, that the host sets $r'_i=0$ for $\alpha_i < 1/2$ and $r_i=0$ for all $\alpha_i \ge 1/2$. Thus, we may recharacterize a strategy for the host as some $\mbf \theta \in \mathbb{R}^n_+$ with $\sum_{i=1}^n \theta_i=1$, with the interpretation that if $\alpha_i < 1/2$, then $r_i=\theta_i T$ and $r'_i=0$ and if $\alpha_i \ge 1/2$, then  $r_i=0$ and $r'_i=\theta_i T$. Let
\[
w_i = \begin{cases}
    \frac{  \alpha_i}{1-\alpha_i} T & \text{ if } \alpha_i <1/2 \text{ and} \\
    T & \text{ if } \alpha_i \ge 1/2,
\end{cases}
\]
and identify a strategy $\mbf \theta$ for the host with the vector $\mbf y = \sum_{i \in V} \theta_i w_i \mbf e^i  \in \Delta(\mbf w)$. 
Then the expected payoff of $\mbf y$ against a contestant strategy $\mbf x \in \mathcal{B}(f)$ is equal to $\mbf x^T \mbf y$.
This is a special case of the game studied in~\cite{hellerstein2023game}, as described in Subsection~\ref{sec:polymatroid}, and the solution follows from the results of that paper. We refer the reader to the aforementioned paper for the precise solution, as it cannot be concisely expressed here in closed form.

\section{Fixed Main and Consolation Prize Money} \label{sec:fixed-C&M}

Recall that in this final variant, the total amount of main prize money and the total amount of consolation prize money are given by fixed parameters $M,C >0$. The host's pure strategies are given by the points $\mbf y^{(i,j)}$ obtained by taking $(\mbf r, \mbf r')=(M \mbf e^i, C \mbf e^j)$ in~(\ref{eq:index}), with $(i,j) \in V^2$. The mixed strategies are the set of all ${y_i=\alpha_i r_i/(1-\alpha_i) + r'_i}$ with $\sum_{i \in V} r_i=M$ and $\sum_{i \in V} r'_i=C$.

As $C \rightarrow 0$, this variant of the game reduces to a special case of the first variant, discussed in Section~\ref{sec:fixed-q'}.

As $M \rightarrow 0$, the game becomes an example of the more general game studied in \cite{hellerstein2023game} (see Subsection~\ref{sec:polymatroid}).

For arbitrary $M$ and $C$, we analyze this game by considering a more general model, presented in the next subsection.

\subsection{General Model Definition} \label{sec:general}

We consider a two player zero-sum game between a maximizing Player~1 and a minimizing Player~2. In order to define the players' strategy sets, we first define a submodular function $f:2^V \rightarrow \mathbb{R}_+$, with $V=\{1,\ldots,n\}$. We assume that $f$ is non-negative and non-decreasing (with respect to set inclusion), and that $f(\emptyset)=0$. We also define non-negative vectors~${\mbf a,\mbf b \in \mathbb{R}^n_+}$.

Any permutation $\sigma$ of $V$ corresponds to a vertex $\mbf x^\sigma$ of the base polyhedron $\mathcal{B}(f)$, and these vertices are the pure strategies of Player 1. Hence, the set of mixed strategies of Player 1 is simply~$\mathcal{B}(f)$.

For Player 2, a pure strategy is a pair $(i,j) \in V^2$. For a pure strategy $\mbf x^\sigma$ of Player 1 and a pure strategy $(i,j)$ of Player 2, the payoff is defined as 
\[
P(\mbf x^\sigma, (i,j)) = a_i x^{\sigma}_i + b_j x^{\sigma}_j.
\]
By linearity, the payoff of a mixed strategy $\mbf x$ of Player 1 against $(i,j)$ is given by a similar expression (with $\mbf x$ replacing $\mbf x^\sigma$). We may rewrite the payoff $P(\mbf x,(i,j))$ as $P(\mbf x,(i,j)) = \mbf x^T \mbf y$, where $\mbf y$ is defined by
\[
\mbf{y} \equiv a_i \mbf{e^i} + b_j \mbf{e^j}.
\]
%Let $\Delta(\mbf a)$ denote the simplex given by all convex combinations of vectors $a_i \mbf e^i,~i\in V$, and similarly for $\Delta(\mbf b)$. 
It follows from standard linear algebra that the convex hull of all pure Player 2 strategies $a_i \mbf e^i + b_j \mbf e^j$ is equal to the Minkowski sum $\Delta(\mbf a)+\Delta(\mbf b)$. This is Player 2's mixed strategy set.

The payoff for a Player 1 mixed strategy $x \in \mathcal{B}(f)$ and a Player 2 mixed strategy $y \in \Delta(\mbf a)+\Delta(\mbf b)$ is given by
\[
P(\mbf x, \mbf y )= \mbf{x}^T \mbf{y}.
\]

The relation to the third variant of the Quiz Show Game should be clear. If we set $a_i = M\alpha_i/(1-\alpha_i)$ and $b_i = C$ for all $i \in V$, then $\Delta(\mbf a) + \Delta(\mbf b)$ is the set of feasible strategies for the game show host. For a strategy $\mbf x \in \mathcal{B}(g)$ of the contestant (where $g$ is given by~(\ref{eq:f})) and a strategy $\mbf y \in \Delta(\mbf a) + \Delta(\mbf b)$ of the host, the payoff of the game is $\mbf x^T \mbf y$.

\subsection{Results}

We make the standing assumption for the rest of this section that $\mbf a$ and $\mbf b$ satisfy $a_1 \le \cdots \le a_n$ and $b_1 \ge \cdots \ge b_n$. Note that this is a valid assumption for the quiz show game, without loss of generality.

We first characterize Player 2's mixed strategy set by a set of linear inequalities.

\begin{lemma} \label{lem:ineq}
    The vector $\mbf y\in \mathbb{R}^n_+$ lies in Player 2's mixed strategy set $ \Delta(\mbf a) + \Delta(\mbf b)$ if and only if $\mbf y$  satisfies the following inequalities.
    \begin{align}
        a_m \sum_{i=1}^m \frac{y_i}{a_i} + b_m \sum_{i=m+1}^n \frac{y_i}{b_i} &\ge a_m + b_m \text{ for all }m \in V; \label{eq:ineq1}\\
        b_m \sum_{i=1}^m \frac{y_i}{b_i} + a_m \sum_{i=m+1}^n \frac{y_i}{a_i} &\le a_m + b_m \text{ for all }m \in V. \label{eq:ineq2}
    \end{align}
\end{lemma}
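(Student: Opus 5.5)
The plan is to reduce membership in $\Delta(\mbf a)+\Delta(\mbf b)$ to a one-dimensional feasibility question, and then evaluate the two extreme values of a linear functional by a greedy argument. By definition, $\mbf y\in\Delta(\mbf a)+\Delta(\mbf b)$ if and only if there exist $\mbf\theta,\mbf\phi\ge \mbf 0$ with $\sum_i\theta_i=\sum_i\phi_i=1$ and $y_i=\theta_i a_i+\phi_i b_i$ for every $i$. I will first assume all $a_i,b_i>0$, and treat zero entries at the end by a limiting or continuity argument.

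Given $\mbf y\ge \mbf 0$, I eliminate $\mbf\phi$ by setting $\phi_i=(y_i-\theta_i a_i)/b_i$. Nonnegativity of $\mbf\phi$ and $\mbf\theta$ becomes the box constraint $0\le\theta_i\le c_i\equiv y_i/a_i$. The constraint $\sum\phi_i=1$ becomes
\[
L(\mbf\theta)\equiv\sum_i r_i\theta_i=\sum_i \frac{y_i}{b_i}-1,\qquad r_i\equiv a_i/b_i .
\]
The standing assumption gives $r_1\le\cdots\le r_n$. Let $K=\{\mbf\theta:0\le\theta_i\le c_i,\ \sum_i\theta_i=1\}$. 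Then $\mbf y$ is feasible if and only if $K\neq\emptyset$, which is $\sum_i c_i\ge1$, and the target $t=\sum_i y_i/b_i-1$ lies in $[\min_K L,\max_K L]$. The second condition suffices because $K$ is convex and $L$ is linear, so $L(K)$ is an interval.

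Next I compute the two extremes. Since the $r_i$ are sorted, $\min_K L$ is attained greedily by filling $\theta_1,\theta_2,\dots$ up to their caps until the total reaches $1$, and $\max_K L$ by filling from index $n$ downward. To get clean inequalities I will use the LP-duality (or Lagrangian) form of these greedy values:
\[
\min_K L=\max_{m}\Bigl[r_m-\sum_{i<m}(r_m-r_i)c_i\Bigr],\qquad
\max_K L=\min_{m}\Bigl[r_m+\sum_{i>m}(r_i-r_m)c_i\Bigr].
\]
This is the step I expect to need the most care. I will check that the expression in brackets is concave piecewise-linear in the breakpoint $r_m$, and that it is maximized exactly at the index where the greedy fill crosses $1$. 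Alternatively, I can verify directly that every bracketed term is a lower (respectively upper) bound on $L(\mbf\theta)$ for each $\mbf\theta\in K$, and that equality holds at the greedy point.

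Substituting gives the stated inequalities. The condition $t\ge r_m-\sum_{i\le m}(r_m-r_i)y_i/a_i$, multiplied by $b_m$ and rearranged, becomes
\[
a_m\sum_{i\le m}\frac{y_i}{a_i}+b_m\sum_{i>m}\frac{y_i}{b_i}\ge a_m+b_m,
\]
which is (\ref{eq:ineq1}). The analogous manipulation of $t\le r_m+\sum_{i\ge m}(r_i-r_m)y_i/a_i$ yields (\ref{eq:ineq2}). The nonemptiness condition $\sum_i y_i/a_i\ge1$ is implied by (\ref{eq:ineq1}) with $m=n$, since $a_n\sum_i y_i/a_i\ge a_n+b_n\ge a_n$. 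Hence (\ref{eq:ineq1}) and (\ref{eq:ineq2}) together are equivalent to feasibility.

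Finally, I handle indices with $a_i=0$ or $b_i=0$, where the ratios $y_i/a_i$ or $y_i/b_i$ are undefined. I would either adopt the convention that such coordinates are forced (for instance $a_i=0$ forces $\theta_i a_i=0$), or obtain the statement by perturbing $\mbf a,\mbf b$ to positive vectors and using closedness of both sides.
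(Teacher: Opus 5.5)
Your proposal is correct and follows essentially the paper's route. You eliminate $\mbf q'$ (your $\mbf\phi$) and reduce to whether the target $\sum_i y_i/b_i-1$ lies in the interval swept out by $L(\mbf q)=\sum_i (a_i/b_i)q_i$ over the capped simplex; you then compute the endpoints greedily and recognize (\ref{eq:ineq1}) at the crossing index and (\ref{eq:ineq2}) at its mirror index as exactly the endpoint conditions. The only real difference is packaging: writing the endpoints as $\max_m[\cdot]$ and $\min_m[\cdot]$ gives you the forward direction by weak duality instead of the paper's separate term-by-term bound. Your last paragraph on zero entries can be dropped, since the statement presupposes $a_i,b_i>0$.
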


\begin{proof}
We first assume that $ \Delta(\mbf a) + \Delta(\mbf b)$, and we show that $\mbf y$ satisfies~(\ref{eq:ineq1}) and~(\ref{eq:ineq2}). In this case, we may write
\[
y_i = a_i q_i + b_i q_i', 
\qquad i = 1,\ldots,n,
\]
with each $q_i$ and $q_i'$ satisfying
\[
q_i, q_i' \ge 0,
\qquad
\sum_{i=1}^{n} q_i = 1,
\qquad
\sum_{i=1}^{n} q_i' = 1.
\]
We first prove~(\ref{eq:ineq1}). For $i \le m$, we use our expression for $y_i$ to write
\begin{align}
a_m \frac{y_i}{a_i}
=
a_m q_i + a_m \frac{b_i}{a_i} q_i'
&\ge
a_m q_i + b_m q_i',
\label{eq:small-i}
\end{align}
where the inequality follows from the fact $a_i$ is non-increasing and $b_i$ is non-decreasing in $i$.

Similarly, for $i > m$, we have
\begin{align}
b_m \frac{y_i}{b_i}
=
b_m \frac{a_i}{b_i} q_i + b_m q_i'
&\ge
a_m q_i + b_m q_i'. \label{eq:big-i}
\end{align}
Summing~(\ref{eq:small-i}) over $i \le m$ and~(\ref{eq:big-i}) over $i > m$, we obtain
\[
a_m \sum_{i=1}^{m} \frac{y_i}{a_i}
+
b_m \sum_{i=m+1}^{n} \frac{y_i}{b_i}
\ge
a_m \sum_{i=1}^{n} q_i
+
b_m \sum_{i=1}^{n} q_i'
=
a_m + b_m,
\]
which is exactly~(\ref{eq:ineq1}). 

We prove~(\ref{eq:ineq2}) similarly, using the monotonicity of $a_i$ and $b_i$ to write
\begin{align*}
    b_m \sum_{i=1}^{m} \frac{y_i}{b_i}
+
a_m \sum_{i=m+1}^{n} \frac{y_i}{a_i} &= b_m \sum_{i=1}^{m} \left( \frac{a_i}{b_i} q_i + q'_i \right)
+
a_m \sum_{i=m+1}^{n} \left( q_i + \frac{b_i}{a_i} q'_i \right) \\
& \le \sum_{i=1}^{m} \left( a_m q_i + b_m q'_i \right)
+
\sum_{i=m+1}^{n} \left( a_m q_i +  b_m q'_i \right) \\
&= a_m+ b_m,
\end{align*}
which is~(\ref{eq:ineq2}). So every $\mbf y \in  \Delta(\mbf a) + \Delta(\mbf b)$ satisfies~(\ref{eq:ineq1}) and ~(\ref{eq:ineq2}).

Now assume that $\mbf y \ge \mbf 0$ satisfies~(\ref{eq:ineq1}) and~(\ref{eq:ineq2}) for every $m$.
We show that $\mbf y \in \Delta(\mbf a) + \Delta(\mbf b)$. In particular, we will construct $\mbf q, \mbf q' \in \mathbb{R}^n_+$ with $\sum_{i=1}^n q_i = \sum_{i=1}^n q'_i = 1$ such that $y_i = a_i q_i + b_i q'_i$ for each $i \in V$.

We claim that it is sufficient to find  a $\mbf q \in \mathbb{R}^n$ such that
\begin{align}
0 \le q_i &\le \frac{y_i}{a_i} \text{ for } i \in V ,
\label{eq:q} \\
\sum_{i=1}^{n} q_i &= 1, \label{eq:q-sum} \text{ and} \\
T \equiv \sum_{i=1}^{n} \frac{y_i - a_i q_i}{b_i} 
&=
  1.\label{eq:q'}
\end{align}
Indeed, given such a $\mbf q$, we may define $\mbf q'$ by
\[
q_i' := \frac{y_i - a_i q_i}{b_i}.
\]
The inequalities~(\ref{eq:q}) ensure that each $q'_i \ge 0$, and~(\ref{eq:q'}) ensures that $\sum_{i=1}^n q'_i = 1$. Furthermore, it is clear from the definition of $\mbf q'$ that $y_i=a_i q_i + b_i q'_i$ for each $i \in V$.

So it remains to prove the existence of $\mbf q$ satisfying (\ref{eq:q})-(\ref{eq:q'}).

Consider the set
\[
U :=
\left\{
\mbf q \in \mathbb{R}^n :
0 \le q_i \le \frac{y_i}{a_i},\;
\sum_{i=1}^{n} q_i = 1
\right\},
\]
and define $L:U \rightarrow \mathbb{R}$ by
\[
L( \mbf q) = \sum_{i=1}^{n} \frac{a_i}{b_i} q_i.
\]
We will show that there exists some $\mbf q \in U$ such that $L(\mbf q) = T$. Clearly, such a~$\mbf q$ satisfies inequalities~(\ref{eq:q}) to~(\ref{eq:q'}).

Because~(\ref{eq:ineq1}) with $m=n$ gives
\[
a_n \sum_{i=1}^{n} \frac{y_i}{a_i} \ge a_n + b_n,
\]
we have
\begin{align}
\sum_{i=1}^{n} \frac{y_i}{a_i} &\ge 1 + \frac{b_n}{a_n} > 1. \label{eq:ell}
\end{align}

We are going to compute $\arg \min L(\mbf q)$ and $\arg \max L(\mbf q)$. Let $\ell \in V$ be such that
\[
\sum_{i=1}^{\ell-1} \frac{y_i}{a_i} < 1 \le
\sum_{i=1}^{\ell} \frac{y_i}{a_i}.
\]
By~(\ref{eq:ell}), such an $\ell$ exists and is uniquely defined for a given $\mbf y$.

Let $\mbf q^{(1)}$ be given by
\[
q^{(1)}_i =
\begin{cases}
\dfrac{y_i}{a_i}, & i < \ell, \\[6pt]
1 - \displaystyle\sum_{i=1}^{\ell-1} \frac{y_i}{a_i}, & i = \ell, \\[8pt]
0, & i > \ell.
\end{cases}
\]

It is clear that $\mbf q^{(1)} \in U$. Let $\mbf q$ be some other element of $U$, and consider
$L(\mbf q) - L(\mbf q^{(1)})$. Because every $\mbf q \in U$ satisfies $0 \le q_i \le \frac{y_i}{a_i}$,
it follows that
\[
q_i - q_i^{(1)} \le 0 \quad (i < \ell),
\qquad
q_i - q_i^{(1)} \ge 0 \quad (i > \ell).
\]
Also, by the monotonicty of $\mbf a$ and $\mbf b$, we have
\[
\frac{a_i}{b_i} \le \frac{a_\ell}{b_\ell} \quad \text{for } i < \ell
\qquad \text{and} \qquad
\frac{a_i}{b_i} \ge \frac{a_\ell}{b_\ell} \quad \text{for } i > \ell.
\]
It follows from these four sets of inequalities that $(a_i/b_i)(q_i-q^1_i) \ge (a_\ell/b_\ell)(q_\ell-q^1_\ell)$ for $i < \ell$ and $i > \ell$. Hence,
\[
L(\mbf q) - L(\mbf q^{(1)})
\ge
\frac{a_\ell}{b_\ell} \sum_{i=1}^{n} (q_i - q_i^{(1)})
=
\frac{a_\ell}{b_\ell}(1 - 1)
= 0.
\]
So $L(\mbf q) \ge L(\mbf q^{(1)})$ for every $\mbf q \in U$, and hence
$\mbf q^{(1)}$ minimizes $L(\mbf q)$ over $U$.

We may compute
\begin{align}
\min_{\mbf q \in U} L(\mbf q) = \sum_{i=1}^{\ell-1} \frac{a_i}{b_i}  \frac{y_i}{a_i}
+ \frac{a_\ell}{b_\ell} \left(1 - \sum_{i=1}^{\ell-1} \frac{y_i}{a_i} \right).
\label{eq:L(q)-lowbound}
\end{align}
By a similar argument, 
\begin{align}
\max_{\mbf q \in U} L(\mbf q) = \sum_{i=\ell'+1}^{n} \frac{a_i}{b_i}  \frac{y_i}{a_i}
+ \frac{a_{\ell'}}{b_{\ell'}} \left(1 - \sum_{i=\ell'+1}^{n} \frac{y_i}{a_i} \right).
\label{eq:L(q)-upbound}
\end{align}
for some $\ell' \in V$, where
\[
\sum_{i=\ell'+1}^{n} \frac{y_i}{a_i} < 1 \le \sum_{i=\ell'}^{n} \frac{y_i}{a_i}.
\]
Thus, $L(U)$ is the interval determined by~(\ref{eq:L(q)-lowbound}) and~(\ref{eq:L(q)-upbound}), since $U$ is convex and compact and $L$ is a linear map. To complete the proof, it is sufficient to show that $T$ lies in this interval. Rearranging~(\ref{eq:ineq1}), with $m=\ell$, we obtain
\[
T
\ge
\sum_{i=1}^{\ell-1} \frac{a_i}{b_i}  \frac{y_i}{a_i}
+ \frac{a_\ell}{b_\ell} \left(1 - \sum_{i=1}^{\ell-1} \frac{y_i}{a_i} \right).
\]
The right-hand side is the minimum of $L(\mbf q)$, as computed in~(\ref{eq:L(q)-lowbound}).

Similarly, rearranging~(\ref{eq:ineq2}, with $m=\ell$, we obtain
\[
a_{\ell'} \sum_{i={\ell'}+1}^{n} \frac{y_i}{a_i}
+
b_{\ell'} \sum_{i=1}^{\ell'} \frac{y_i}{b_i}
\le
a_{\ell'} + b_{\ell'},
\]
and this time, the right-hand side is the maximum of $L(\mbf q)$, as in~(\ref{eq:L(q)-upbound}). Thus, $T \in L(U)$, and the proof is complete.
\end{proof}

\begin{remark} It is easy to see from Lemma~\ref{lem:ineq} that there must be an optimal strategy for Player 2 for which one of the inequalities~(\ref{eq:ineq1}) hold with equality. 
Indeed, suppose $\mbf y$ is an optimal mixed strategy that does not satisfy any of inequalities~(\ref{eq:ineq1}) with equality, and let $\varepsilon_m >0 $ be the difference between the left-hand side and the right-hand side of inequality~(\ref{eq:ineq1}), for each $m \in V$. Let $\delta = \min_{m \in V} \varepsilon_m a_1/a_m$. Now let $\mbf y'$ be the same as $\mbf y$, except for the first coordinate, which is given by $y'_1=y_1 - \delta$. 
Then replacing $\mbf y$ with $\mbf y'$, the left-hand side of inequality~(\ref{eq:ineq1})  decreases by $\delta a_m/a_1 \le \varepsilon_m$ for each $m$, and this inequality holds with equality for some $m$. 
Hence, each of the inequalities~(\ref{eq:ineq1}) hold, and at least one holds with equality when $\mbf y$ is replaced with $\mbf y'$.
Lastly, it is clear that all the inequalities~(\ref{eq:ineq2}) continue to hold.
So $\mbf y'$ is also a mixed strategy, and it clearly (weakly) dominates $\mbf y$. Thus, it must be optimal.
\end{remark}

We now give sufficient conditions to be able to find closed form optimal strategies for each player. This will allow us to find optimal strategies in the quiz show game when $M$ is large compared to~$C$ -- a reasonable assumption for the case of information gathering drones, where the main prizes and consolation prizes correspond to the information that could be gathered at a site with or without capture, respectively.

\begin{theorem} \label{thm:main}
    Let $k= \min \{\ell:\sum_{i=l+1}^n 1/b_i \le \sum_{i=1}^l 1/a_i\}$ and 
    \[
    \mbf y^k = \lambda_k \mbf 1,
    \]
    with $\mbf 1$ representing the vector of ones and
    \[
    \lambda_k = \frac{a_k+b_k}{a_k \sum_{i=1}^k 1/a_i + b_k \sum_{i=k+1}^n 1/b_i}.
    \]
    Then $\mbf y^k$ is a mixed strategy for Player 2, and 
    \[ P(\mbf x, \mbf y^k) \le v_k \equiv \lambda_k f(V).\] Furthermore, the strategy $\mbf y^k$ is optimal if $\mbf x^k$ lies in $\mathcal{B}(f)$, where
    \[
    (\mbf x^k)^T = \frac{\lambda_k f(V)}{a_k+b_k} (a_k/a_1,\ldots,a_k/a_k,b_k/b_{k+1},\ldots,b_k/b_n).
    \]
    In this case, $\mbf x^k$ is also optimal, and the value of the game is $v_k$.
\end{theorem}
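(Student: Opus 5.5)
The plan has three parts: show that $\mbf y^k$ is feasible, show that it caps Player 1 at $v_k$, and then show that $\mbf x^k$ (when it lies in $\mathcal{B}(f)$) guarantees Player 1 at least $v_k$ against every strategy of Player 2.

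\textbf{Feasibility of $\mbf y^k$.} I would use Lemma~\ref{lem:ineq}. Write $A_m=\sum_{i\le m}1/a_i$ and $B_m=\sum_{i>m}1/b_i$. For $\mbf y=\lambda\mbf 1$, inequality~(\ref{eq:ineq1}) becomes $\lambda \ge (a_m+b_m)/(a_mA_m+b_mB_m)=:\phi(m)$, and inequality~(\ref{eq:ineq2}) becomes $\lambda \le (a_m+b_m)/(b_m\bar B_m+a_m\bar A_m)=:\psi(m)$, where $\bar B_m=\sum_{i\le m}1/b_i$ and $\bar A_m=\sum_{i>m}1/a_i$. By definition $\lambda_k=\phi(k)$, so I need two facts: $\phi(m)\le\phi(k)$ for all $m$, and $\phi(k)\le\psi(m)$ for all $m$.

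For the first fact, I compare consecutive values. Note that $a_mA_m+b_mB_m=(a_m+b_m)\bigl(\theta A_m+(1-\theta)B_m\bigr)$ with $\theta=a_m/(a_m+b_m)$, so $1/\phi(m)$ is a weighted average of $A_m$ and $B_m$. For $m<k$ we have $B_m>A_m$, and for $m\ge k$ we have $B_m\le A_m$, since $A$ is increasing and $B$ is decreasing. I then want to show that this weighted average is minimized at $m=k$. Going from $m$ to $m+1$, $A$ increases by $1/a_{m+1}$, $B$ decreases by $1/b_{m+1}$, and $\theta$ increases because $a$ is increasing and $b$ is decreasing. For $m<k$, both the shift of weight toward the smaller quantity $A$ and the change $\theta/a_{m+1}-(1-\theta)/b_{m+1}$ in the combination should make the average decrease. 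I expect the term-by-term bookkeeping here, especially the interaction between the changing weights and the jump at $m+1$, to be the main obstacle. My fallback is a direct argument: $a_mA_m+b_mB_m\ge a_kA_k\cdot\frac{a_m+b_m}{a_k+b_k}+\cdots$, obtained by bounding $1/a_i\ge 1/a_{\max(i,\cdot)}$ in the same way as in the proof of Lemma~\ref{lem:ineq}.

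For the second fact, observe that $\psi(m)$ is the corresponding weighted average with the roles of $a$ and $b$ swapped. By Chebyshev-type rearrangement (the sequences $a$ and $b$ are oppositely ordered), each $\psi(m)\ge$ the harmonic-type quantity $(\sum_i 1/(\text{weights}))^{-1}$, which dominates $\phi(k)$.

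\textbf{Upper bound against $\mbf y^k$.} This is immediate: for any $\mbf x\in\mathcal{B}(f)$,
\[
P(\mbf x,\mbf y^k)=\lambda_k\,\mbf x(V)=\lambda_k f(V)=v_k .
\]

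\textbf{Lower bound from $\mbf x^k$.} Suppose $\mbf x^k\in\mathcal{B}(f)$. It suffices to check each pure strategy $(i,j)$ of Player 2, and I would show $a_ix^k_i+b_jx^k_j\ge v_k$. Set $c=\lambda_kf(V)/(a_k+b_k)$. Then $a_ix^k_i$ equals $ca_k$ for $i\le k$ and $c\,a_ib_k/b_i\ge c\,a_k$ for $i>k$, using that $a_i\ge a_k$ and $b_i\le b_k$. Similarly, $b_jx^k_j\ge c\,b_k$ for every $j$. Hence, for $i\ne j$, the payoff is at least $c(a_k+b_k)=v_k$. For $i=j$, the payoff is $(a_i+b_i)x^k_i$, and the same two bounds apply to the single coordinate, so it is again at least $c(a_k+b_k)$.

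Combining the two bounds gives optimality of both $\mbf y^k$ and $\mbf x^k$, with value $v_k$.
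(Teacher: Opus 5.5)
\textbf{The gap is feasibility of $\mbf y^k$.} This is the one nontrivial claim of the theorem, and your proposal does not prove it. Without it, the identity $P(\mbf x,\mbf y^k)=\lambda_k f(V)$ says nothing about the value of the game.

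You reduce feasibility, via Lemma~\ref{lem:ineq}, to two facts, but you establish neither:
\begin{itemize}
\item For $\phi(m)\le\phi(k)$, you stop at ``should make the average decrease'' and flag the bookkeeping as the main obstacle. Your fallback ends in ``$+\cdots$''.
\item For $\phi(k)\le\psi(m)$, you appeal to an unspecified ``harmonic-type quantity''.
\end{itemize}

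Both facts are true, and your route can be finished as follows.

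\emph{First fact.} Write $h(m)=\theta_mA_m+(1-\theta_m)B_m$ with $\theta_m=a_m/(a_m+b_m)$. Compare $h(m+1)$ with $h(m)$ using the \emph{new} weight $\theta_{m+1}$. The jump term $\theta_{m+1}/a_{m+1}-(1-\theta_{m+1})/b_{m+1}$ is then exactly zero, because both pieces equal $1/(a_{m+1}+b_{m+1})$. Hence $h(m+1)-h(m)=(\theta_{m+1}-\theta_m)(A_m-B_m)$. This is $\le 0$ for $m<k$ and $\ge 0$ for $m\ge k$, so $h$ is minimized at $k$. With the old weight $\theta_m$, the jump term is negative but the other term changes sign at $m=k-1$; that is the interaction you got stuck on.

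\emph{Second fact.} The intermediate quantity is $H=\sum_i 1/(a_i+b_i)$. Termwise, $\frac{a_m}{(a_m+b_m)a_i}\le\frac{1}{a_i+b_i}$ for $i>m$, and $\frac{b_m}{(a_m+b_m)b_i}\le\frac{1}{a_i+b_i}$ for $i\le m$. Each reduces to $a_mb_i\le a_ib_m$ or $a_ib_m\le a_mb_i$, which follow from the opposite orderings of $\mbf a$ and $\mbf b$. Hence $1/\psi(m)\le H$. The reversed comparisons give $1/\phi(m')\ge H$ for every $m'$, so $\phi(k)\le 1/H\le\psi(m)$.

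\textbf{How the paper handles this.} The paper avoids all of this, and also the sufficiency half of Lemma~\ref{lem:ineq}, by writing the decomposition down directly. It sets $q_i=\lambda_k/a_i$ for $i<k$ and $q'_i=\lambda_k/b_i$ for $i>k$, and puts the leftover masses $q_k,q'_k$ on question $k$. The identity $a_kq_k+b_kq'_k=\lambda_k$ is just the definition of $\lambda_k$. Nonnegativity of the leftovers is exactly
\begin{itemize}
\item $\sum_{i<k}1/a_i\le\sum_{i\ge k}1/b_i$, which is minimality of $k$, and
\item $\sum_{i>k}1/b_i\le\sum_{i\le k}1/a_i$, which is the definition of $k$.
\end{itemize}
This is considerably shorter than your route.

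The rest of your proposal is fine. Your upper bound is identical to the paper's. Your lower bound checks $a_ix^k_i+b_jx^k_j\ge c(a_k+b_k)$ on pure strategies. That is correct, and it is just the pure-strategy instance of inequality~(\ref{eq:ineq1}) with $m=k$, which the paper invokes.
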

\begin{proof}
We first show that $\mbf y^k$ is a mixed strategy for Player 2.
Define $\mbf q, \mbf q'$ by
\[
q_i =
\begin{cases}
\dfrac{\lambda_k}{a_i}, & i < k, \\[6pt]
1 - \lambda_k \displaystyle\sum_{i=1}^{k-1} \frac{1}{a_i}, & i = k, \\[8pt]
0, & i > k,
\end{cases}
\]
and
\[
q'_i =
\begin{cases}
0, & i < k, \\[6pt]
1 - \lambda_k \displaystyle\sum_{i=k+1}^{n} \frac{1}{b_i}, & i = k, \\[8pt]
\dfrac{\lambda_k}{b_i}, & i > k.
\end{cases}
\]

We will show that $\mbf q$ and $\mbf q'$ are probability vectors. First,
\[
\sum_{i=1}^{n} q_i
=
\lambda_k \sum_{i=1}^{k-1} \frac{1}{a_i}
+
\left(1 - \lambda_k \sum_{i=1}^{k-1} \frac{1}{a_i}\right)
= 1,
\]
and similarly,
\[
\sum_{i=1}^{n} q_i'
=
\left(1 - \lambda_k \sum_{i=k+1}^{n} \frac{1}{b_i}\right)
+
\lambda_k \sum_{i=k+1}^{n} \frac{1}{b_i}
= 1.
\]
It remains to show nonnegativity of $q_k$ and $q'_k$. Using the definition of $\lambda_k$, the condition $q_k \ge 0$ is equivalent to
%\[
%lambda_k \sum_{i=1}^{k-1} \frac{1}{a_i} \le 1,
%\]
%which is equivalent (by the definition of $\lambda_k$) to
%\[
%(a_k + b_k) \sum_{i=1}^{k-1} \frac{1}{a_i}
%\le
%1
%+
%a_k \sum_{i=1}^{k-1} \frac{1}{a_i}
%+
%b_k \sum_{i=k+1}^{n} \frac{1}{b_i},
%\]

%\[
%b_k \sum_{i=1}^{k-1} \frac{1}{a_i}
%\le
%1 + b_k \sum_{i=k+1}^{n} \frac{1}{b_i}.
%\]

%This is equivalent to
\[
\sum_{i=1}^{k-1} \frac{1}{a_i}
\le
%\frac{1}{b_k}
%+
%\sum_{i=k+1}^{n} \frac{1}{b_i}
%=
\sum_{i=k}^{n} \frac{1}{b_i},
\]
which follows from the minimality of $k$.

Similarly, the condition $q_k' \ge 0$ is equivalent to
%\[
%\lambda_k \sum_{i=k+1}^{n} \frac{1}{b_i} \le 1,
%\]
%which is equivalent to
%\[
%(a_k + b_k) \sum_{i=k+1}^{n} \frac{1}{b_i}
%\le
%1
%+
%a_k \sum_{i=1}^{k-1} \frac{1}{a_i}
%+
%b_k \sum_{i=k+1}^{n} \frac{1}{b_i},
%\]
%\[
%a_k \sum_{i=k+1}^{n} \frac{1}{b_i}
%\le
%1 + a_k \sum_{i=1}^{k-1} \frac{1}{a_i},
%\]
%which is equivalent to
\[
\sum_{i=k+1}^{n} \frac{1}{b_i}
\le
%\frac{1}{a_k}
%+
%\sum_{i=1}^{k-1} \frac{1}{a_i}
%=
\sum_{i=1}^{k} \frac{1}{a_i},
\]
again true by the minimality of $k$.

Thus $\mbf q, \mbf q' \ge 0$, and both are probability vectors. We now show that $ y^k_i = \lambda_k$ is equal to $q_i a_i + q_i' b_i$, so that $\mbf y^k \in \Delta(\mbf a) + \Delta(\mbf b)$. This is trivially true for $i<k$ or $i>k$. For $i = k$,
\begin{align}
a_k q_k + b_k q_k'
&=
a_k\!\left(1 - \lambda_k \sum_{i=1}^{k-1} \frac{1}{a_i}\right)
+
b_k\!\left(1 - \lambda_k \sum_{i=k+1}^{n} \frac{1}{b_i}\right)
\\
&=
a_k + b_k
-
\lambda_k\!\left(
a_k \sum_{i=1}^{k-1} \frac{1}{a_i}
+
b_k \sum_{i=k+1}^{n} \frac{1}{b_i}
\right). \\
&= \lambda_k,
\end{align}
by definition of $\lambda_k$. Therefore, $\mbf y^k \in \Delta(\mbf a) + \Delta(\mbf b)$, so $\mbf y^k$ is a Player~2 mixed strategy. 

To show that $\mbf y$ guarantees an expected payoff of at most $v_k$ against any Player~1 strategy $\mbf x \in \mathcal{B}(f)$, we calculate the payoff
\[
P(\mbf x,\mbf y^k) = \mbf x^T \mbf y^k
= \lambda_k \sum_{i=1}^{n} x_i
= \lambda_k f(V)
= v_k ,
\]
where the penultimate equality follows from $\mbf x \in \mathcal{B}(f)$.

Assume now that $\mbf x^k \in \mathcal{B}(f)$, and we will show that $\mbf x^k$ guarantees a payoff at least $v_k$ against every Player 2 mixed strategy $\mbf y \in \Delta(\mbf a) + \Delta(\mbf b)$. Indeed, this payoff is
\begin{align*}
(\mbf x^k)^T \mbf y
&=
\frac{\lambda_k f(V)}{a_k + b_k}
\left(
a_k \sum_{i=1}^{k} \frac{y_i}{a_i}
+
b_k \sum_{i=k+1}^{n} \frac{y_i}{b_i}
\right) \\
&\ge
\lambda_k f(V) = v_k%\\
%&=
%v_k,
\end{align*}
where the inequality follows from~(\ref{eq:ineq1}), with $m=k$.

Thus, $\mbf x^k$ guarantees payoff at least $v_k$.

Therefore
\[
\sup_{\mbf x \in B(f)} P(\mbf x,\mbf y^k)
\le
v_k
\le
\inf_{\mbf y \in \Delta(\mbf a)+\Delta(\mbf b)} P(\mbf x^k,\mbf y).
\]
Hence, both inequalities must hold with equality, and $(\mbf x^k,\mbf y^k)$ is an equilibrium pair. In particular, both strategies are optimal and the value of the game is $v_k$.
\end{proof}

Note that $\mbf x^k$, as defined in Theorem~\ref{thm:main} always satisfies $\mbf x^k(V)=f(V)$, so $\mbf x^k$ lies in $\mathcal{B}(f)$ if and only if $\mbf x^k(S) \le f(S)$ for all $S \subsetneq V$.

The theorem gives sufficient conditions for Player~2 to be able to make Player~1 indifferent between all her strategies in an equilibrium. This is because the indices $y^k_i$ are all equal. 

It is also worth pointing out that the strategy $\mbf y^k$ makes inequality~(\ref{eq:ineq1}) hold with equality for $m=k$. One might conjecture that the optimal solution is always on the corresponding facet of Player 2's mixed strategy space, whether or not $\mbf y^k$ is optimal, but we will show in Subsection~\ref{sec:numerical} that this is not necessarily the case.

We now use Theorem~\ref{thm:main} to solve this variant of the quiz show game in the case that $M$ is large compared to $C$.

\begin{corollary}
\label{cor:big}    In the quiz show game, if $M/C > \sum_{i=1}^{n-1} \frac{1-\alpha_i}{\alpha_i}$, then the strategies $\mbf x^n$ and $\mbf y^n$ are optimal, and the value of the game is 
    \[
    v_n = \frac{(M\alpha_n/(1-\alpha_n) + C)(1-\prod_{i=1}^n \alpha_i)}{\alpha_n/(1-\alpha_n) \sum_{i=1}^n (1-\alpha_i)/\alpha_i}.
    \]
\end{corollary}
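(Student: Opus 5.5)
We apply Theorem~\ref{thm:main} to the general model with $f=g$ from~(\ref{eq:f}), $a_i = M\alpha_i/(1-\alpha_i)$ and $b_i = C$ for all $i$. The steps are: put the parameters in the form Theorem~\ref{thm:main} requires, show that the hypothesis forces $k=n$, evaluate $\lambda_n$ and $v_n$, and show that $\mbf x^n\in\mathcal{B}(g)$.

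First, the standing assumption. Relabel the questions so that $\alpha_1\le\cdots\le\alpha_n$. Then $a_i$ is non-decreasing, and $b_i=C$ is trivially non-increasing. This relabeling is also the ordering under which the hypothesis $M/C>\sum_{i=1}^{n-1}(1-\alpha_i)/\alpha_i$ is stated, with question $n$ having the largest $\alpha$.

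Next, we identify $k$. Recall $k=\min\{\ell:\sum_{i=\ell+1}^n 1/b_i\le \sum_{i=1}^\ell 1/a_i\}$.
\begin{itemize}
\item For $\ell=n$ the left-hand side is $0$, so the condition holds.
\item For $\ell\le n-1$ the left-hand side is $(n-\ell)/C\ge 1/C$.
\item For $\ell\le n-1$ the right-hand side satisfies $\sum_{i=1}^\ell (1-\alpha_i)/(M\alpha_i)\le \sum_{i=1}^{n-1}(1-\alpha_i)/(M\alpha_i)<1/C$, by the hypothesis.
\end{itemize}
Hence the condition fails for every $\ell<n$, and $k=n$.

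Substituting into the definition of $\lambda_n$ gives
\[
\lambda_n=\frac{M\alpha_n/(1-\alpha_n)+C}{\bigl(\alpha_n/(1-\alpha_n)\bigr)\sum_{i=1}^n (1-\alpha_i)/\alpha_i}.
\]
Since $g(V)=1-\prod_{i=1}^n\alpha_i$, the value $v_n=\lambda_n g(V)$ is exactly the expression in the statement. Theorem~\ref{thm:main} already guarantees that $\mbf y^n$ is a mixed strategy for Player~2 holding Player~1 to at most $v_n$.

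It remains to show that $\mbf x^n\in\mathcal{B}(g)$, which then gives optimality of both strategies by Theorem~\ref{thm:main}. This is the main obstacle, since checking $\mbf x^n(S)\le g(S)$ directly for every $S\subseteq V$ looks awkward. With $k=n$, every coordinate of $\mbf x^n$ has the form $(\lambda_n g(V)/(a_n+C))\,a_n/a_i$. This equals $c\,(1-\alpha_i)/\alpha_i$, where
\[
c=\frac{1-\prod_{j=1}^n\alpha_j}{\sum_{j=1}^n(1-\alpha_j)/\alpha_j}.
\]

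Rather than verify the submodular inequalities, we exhibit $\mbf x^n$ as a convex combination of vertices $\mbf x^\sigma$. We reuse the computation from the proof of Theorem~\ref{thm:fixed-consolation} with $N=n$, in which case there are no questions beyond $N$ and the outer product $\prod_{k=N+1}^n\alpha_k$ is empty. Take the cyclic permutations $\sigma_i=(i,i+1,\ldots,n,1,\ldots,i-1)$ and mix them with probabilities $p_i^*\propto(1-\alpha_i)/\alpha_i$. The telescoping identity established there (which used none of the ordering of the $q'_i$) gives
\[
\sum_{i=1}^n p_i^*\,x_j^{\sigma_i}=c\,\frac{1-\alpha_j}{\alpha_j}=x^n_j
\]
for every $j$. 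Thus $\mbf x^n$ is a convex combination of vertices of $\mathcal{B}(g)$, so $\mbf x^n\in\mathcal{B}(g)$. Theorem~\ref{thm:main} then yields that $\mbf x^n$ and $\mbf y^n$ are optimal and the value is $v_n$.
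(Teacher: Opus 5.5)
Your proof is correct. It agrees with the paper on the ordering, on $k=n$, and on $\lambda_n$ and $v_n$, but it takes a different route for the one substantive step, showing $\mbf x^n\in\mathcal{B}(g)$.

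The paper checks the polymatroid inequalities directly. It defines $\phi(S)=(1-\prod_{i\in S}\alpha_i)/\sum_{i\in S}(1-\alpha_i)/\alpha_i$ and observes that $\mbf x^n(S)\le g(S)$ is equivalent to $\phi(V)\le\phi(S)$. It then proves $\phi$ is non-increasing under inclusion by adding one element at a time and expanding $1-\prod_{i\in S}\alpha_i$ as a sum over nonempty subsets.

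You instead write $\mbf x^n$ explicitly as the mixture of the $n$ cyclic orders with weights proportional to $(1-\alpha_i)/\alpha_i$. This is exactly the contestant strategy of Theorem~\ref{thm:fixed-consolation} in the case $N=n$. Your route gives more than membership. It supplies an implementable randomization over only $n$ permutations, whereas the paper's argument leaves the contestant needing a Carath\'eodory-type decomposition to actually play $\mbf x^n$. It also shows that the large-$M$ optimal contestant strategy in the third variant coincides with the first variant's when $N=n$. The paper's argument has the advantage of being self-contained and not relying on the earlier computation.

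Since you cite the telescoping identity rather than re-derive it, be aware that the cyclic-product convention displayed in the proof of Theorem~\ref{thm:fixed-consolation} is off at $i=j$. Its case split puts $i=j$ in the wrap-around case, but the product there must be empty because $j$ is answered first in $\sigma_j$. The identity itself is still true and quick to verify. The terms with $i\neq j$ telescope to $1-\prod_{k\neq j}\alpha_k$. Adding the $i=j$ term $(1-\alpha_j)/\alpha_j$ gives $(1-\prod_{k}\alpha_k)/\alpha_j$, which is what you need.
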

\begin{proof}
Recall that in the quiz show game, the parameters $a_i$ and $b_i$ are given by
\[
a_i = \frac{M \alpha_i}{1 - \alpha_i},
\qquad
b_i = C,
\qquad
i = 1,\ldots,n.
\]
In order to use Theorem \ref{thm:main}, we first show that the minimal $k$ to satisfy
\begin{align}
\sum_{i=k+1}^{n} \frac{1}{b_i}
\le
\sum_{i=1}^{k} \frac{1}{a_i} \label{eq:cor}
\end{align}
is $k=n$. Indeed, when $k = n$, inequality~(\ref{eq:cor}) trivially holds. Moreover, for any $k < n$, we have
\[
    \sum_{i=k+1}^n \frac{1}{b_i} = \frac{n-k}{C} > \frac{n-k}{M} \sum_{i=1}^{n-1} \frac{1-\alpha_i}{\alpha_i} 
     \ge \frac{1}{M} \sum_{i=1}^{k} \frac{1-\alpha_i}{\alpha_i} 
     = \sum_{i=1}^k \frac{1}{a_i},
\]
where the first inequality follows from the assumption of the corollary. Consequently $k = n$ is the minimal $k$ such that~(\ref{eq:cor}) holds.

%Thus some admissible $k$ always exists. The question is whether a smaller one could already satisfy the inequality.

%Now consider $k = n-1$. The defining inequality becomes
%\[
%\frac{1}{b_n} \le \sum_{i=1}^{n-1} \frac{1}{a_i}.
%\]

%Substituting $b_n = C$ and $a_i = \dfrac{M \alpha_i}{1 - \alpha_i}$,  we get
%\[
%\frac{M}{C} \le \sum_{i=1}^{n-1} \frac{1 - \alpha_i}{\alpha_i}.
%\]

%Therefore, if
%\[
%\frac{M}{C} > \sum_{i=1}^{n-1} \frac{1 - \alpha_i}{\alpha_i},
%\]
%then the inequality for $k = n-1$ fails and hence no index $k < n$ can be minimal.

The corollary follows from Theorem \ref{thm:main} as long as $\mbf x^n$ is a feasible strategy for Player 1. We will show that $\mbf x^n$ is indeed feasible by proving that it lies in $\mathcal{B}(f)$. As already noted, it is sufficient to show that for any $S \subsetneq V$, we have $\mbf x^n(S) \le f(S)$. To do this, we first define
\[
\phi(S)
:=
\frac{1 - \prod_{i \in S} \alpha_i}
{\sum_{i \in S} \frac{1-\alpha_i}{\alpha_i}}.
\]

We will show that $\phi$ is non-increasing with respect to inclusion. This is sufficient to prove the corollary, as $\phi(V) \le \phi(S)$ is equivalent to $\mbf x^n(S) \le f(S)$.

If $T = S \cup \{m\}$ with $m \notin S$, then $\phi(T) \le \phi(S)$ is equivalent, after cross-multiplying, to
\[
\left(1 - \alpha_m \prod_{i \in S} \alpha_i\right)
\sum_{i \in S} \frac{1 - \alpha_i}{\alpha_i}
\le
\left(1 - \prod_{i \in S} \alpha_i\right)
\left(
\frac{1 - \alpha_m}{\alpha_m}
+
\sum_{i \in S} \frac{1 - \alpha_i}{\alpha_i}
\right),
\]
that is,
\[
\left(\prod_{i \in S} \alpha_i\right)
(1 - \alpha_m)
\sum_{i \in S} \frac{1 - \alpha_i}{\alpha_i}
\le
\left(1 - \prod_{i \in S} \alpha_i\right)
\frac{1 - \alpha_m}{\alpha_m}.
\]

If $1 - \alpha_m = 0$ there is nothing to prove; otherwise dividing by
$1 - \alpha_m > 0$ and rewriting, it is enough to show
\[
\sum_{i \in S}\left(\prod_{j \in S, j \neq i} \alpha_j\right)(1 - \alpha_i)
\le
\frac{1 - \prod_{i \in S} \alpha_i}{\alpha_m},
\]
and since $\alpha_m \le 1$, it suffices to prove
\[
\sum_{i \in S}\left(\prod_{j \in S, j \neq i} \alpha_j\right)
(1 - \alpha_i)
\le
1 - \prod_{i \in S} \alpha_i,
\]
Using the identity $\prod_{i \in S} (x_i + y_i) = \sum_{I \subseteq S} \prod_{i \in I} x_i \prod_{i \notin I} y_i$ with $x_i = 1-\alpha_i$ and $y_i = \alpha_i$, we obtain
\[
1-\prod_{i \in S} \alpha_i = \sum_{\emptyset \neq I \subseteq S}
\prod_{i \in I} (1 - \alpha_i)
\prod_{j \in S \setminus I} \alpha_j.
\]
Taking $I$ to be the singletons, it is clear that the sum on the right-hand side contains
\[
\sum_{i \in S} (1 - \alpha_i)
\prod_{j \in S,\, j \ne i} \alpha_j
\]
plus additional nonnegative terms. Therefore $\phi(T) \le \phi(S)$,
so that $\phi$ is non-increasing under inclusion.
Hence, $\phi(V) \le \phi(S)$, which completes the proof.
\end{proof}

Theorem~\ref{thm:main} is only useful if we are able to determine whether a given $\mbf x^k$ lies in $\mathcal{B}(f)$: that is, whether $x^k(S) \le f(S)$ for all $S \subseteq V$. Equivalently, we wish to show that $\min_{S \subseteq V} (f(S)-\mbf x^k(S)) \ge 0$. the function $S \mapsto f(S) - \mbf x^k(S)$ is submodular, so it follows from standard results on minimizing submodular functions that there is a strongly polynomial time algorithm to determine whether $\mbf x^k$ lies in $\mathcal{B}(f)$.

In the case of the quiz show game, we now show that determining whether $\mbf x^k$ lies in $\mathcal{B}(f)$ can be done even quicker, simply by checking the inequality $\mbf x^k(S) \le f(S)$ for $n$ different sets $S$.

\begin{lemma}
\label{lem:algor}    Let $f$ be the submodular function of the quiz show game, given by $f(S) = 1- \prod_{i \in S} \alpha_i$, for fixed $\alpha_1,\ldots,\alpha_n$, and let $\mbf x \in \mathbb{R}^n$ satisfy $\mbf x(V)=f(V)$. Let $\sigma$ be some permutation such that $x_{\sigma(1)}/(1-\alpha_{\sigma(1)}) \ge \cdots \ge x_{\sigma(n)}/(1-\alpha_{\sigma(n)})$, and let $S_j := \{\sigma(1),\ldots,\sigma(j)\}$ for each $j \in V$. Then
    \[
\min_{S \subseteq V} (f(S)-\mbf x(S)) = \min_{j \in V} (f(S_j) - \mbf x(S_j)).
    \]
\end{lemma}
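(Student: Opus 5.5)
The plan is an exchange argument based on the explicit marginal values of $f$. For $S\subseteq V$ write $P(S)\equiv\prod_{i\in S}\alpha_i$ (with $P(\emptyset)=1$), $h(S)\equiv f(S)-\mbf x(S)$ and $r_i\equiv x_i/(1-\alpha_i)$. For $j\notin S$ a direct computation gives
\[
h(S\cup\{j\})-h(S) = (1-\alpha_j)P(S) - x_j = (1-\alpha_j)\bigl(P(S)-r_j\bigr).
\]
Since $0<\alpha_j<1$, the sign of this marginal change is the sign of $P(S)-r_j$. The whole proof reduces to comparing the $r_i$ with the products $P(\cdot)$.

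First, fix any minimizer $S^*$ of $h$ over $2^V$; one exists because $V$ is finite. Then check two local optimality conditions. For $i\in S^*$, removing $i$ cannot decrease $h$. By the identity above applied to $S^*\setminus\{i\}$, this gives $r_i\ge P(S^*\setminus\{i\})$. For $j\notin S^*$, adding $j$ cannot decrease $h$, which gives $r_j\le P(S^*)$. Next, use $P(S^*)=\alpha_i P(S^*\setminus\{i\})<P(S^*\setminus\{i\})$. This holds strictly because $0<\alpha_i<1$ and all products are positive. Chaining the inequalities,
\[
r_j\le P(S^*)<P(S^*\setminus\{i\})\le r_i \quad\text{for all } i\in S^*,\ j\notin S^*.
\]

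Every element of $S^*$ therefore has a strictly larger ratio $r$ than every element outside $S^*$. In particular, no tie class of the ratios is split by $S^*$. So for any permutation $\sigma$ sorting the ratios in non-increasing order, as in the statement, $S^*=S_{|S^*|}$ whenever $S^*\neq\emptyset$. The remaining case is $S^*=\emptyset$. Then $h(\emptyset)=0=f(V)-\mbf x(V)=h(S_n)$, using the hypothesis $\mbf x(V)=f(V)$, so the minimum is again attained at some $S_j$ with $j\in V$. The inequality $\min_{S}h(S)\le\min_j h(S_j)$ is trivial, and equality follows.

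The only delicate point I expect is the handling of ties in the sorted order and of the empty set. Ties are handled by the strictness of the chain above, which uses $\alpha_i<1$ and $P>0$. The empty set is handled by the normalization $\mbf x(V)=f(V)$. The rest is the one-line marginal computation.
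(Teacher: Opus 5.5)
Your proof is correct, and it takes a different route from the paper's. The paper argues by contradiction. It fixes a minimizer $S^*$ and lets $\sigma(m)$ be the element of $S^*$ that comes last in the $\sigma$-order. It uses only the deletion condition at that single element to get $P(S^*\setminus\{\sigma(m)\})\le r_{\sigma(m)}\le r_{\sigma(k)}$ for every $\sigma(k)$ in the gap $A=S_m\setminus S^*$. It then invokes submodularity in the form $f(S^*\cup A)-f(S^*)\le\sum_{i\in A}\bigl(f(S^*\cup\{i\})-f(S^*)\bigr)$ to show that filling in the gap does not increase $f-\mbf x$. So the paper concludes that the prefix $S_m\supseteq S^*$ is also a minimizer.

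You instead impose the single-element optimality conditions in both directions: deletion for every $i\in S^*$ and addition for every $j\notin S^*$. Combined with the strict drop $P(S^*)<P(S^*\setminus\{i\})$, this gives the separation $r_j<r_i$, so $S^*$ is itself a prefix.

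What your route buys:
\begin{itemize}
\item It needs no set-level submodularity inequality.
\item It proves the stronger fact that every nonempty minimizer is a prefix of every admissible $\sigma$.
\item It treats $S^*=\emptyset$ explicitly via $h(\emptyset)=0=h(S_n)$. The paper leaves this case implicit: its $m$ is undefined there, although the contradiction hypothesis excludes $S^*=\emptyset$ for exactly this reason.
\end{itemize}

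What the paper's route buys: it uses only weak inequalities and one deletion condition. It would therefore survive degenerate parameters such as some $\alpha_i=0$. There $P$ can vanish and your strict separation fails. For example, with $\alpha=(0,0,\tfrac12)$, $\mbf x=(1,0,0)$ and $\sigma=(1,2,3)$, the set $\{1,3\}$ is a minimizer but not a prefix. Under the standing assumption $0<\alpha_i<1$ this does not affect your argument.
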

\begin{proof}
    We prove the claim by contradiction. Suppose $S^* \subseteq V$ minimizes $f(S)-\mbf x(S)$, and
\[
f(S^*) - \mbf x(S^*) < \min_{j \in V} \bigl(f(S_j) - \mbf{x}(S_j)\bigr).
\]
Let $m := \max \{i:\sigma(i) \in S^*\}$. Then $\sigma(m) \in S^*$ and $S^* \subseteq S_m$. Define $A := S_m \setminus S^*$.

By the submodularity of $f$,
\begin{align}
f(S_m) - f(S^*)
=
f(S^* \cup A) - f(S^*)
\le
\sum_{i \in A} \bigl(f(S^* \cup \{i\}) - f(S^*)\bigr).\label{eq:submod}
\end{align}

Since $S^*$ minimizes $f(S)-\mbf x(S)$,
\[
f(S^*) - \mbf x(S^*) \le f(S^* \setminus \{\sigma(m)\}) - \mbf x(S^* \setminus \{\sigma(m)\}).
\]
Thus,
\begin{align}
f(S^*) - f(S^* \setminus \{\sigma(m)\}) \le x_{\sigma(m)}.
\label{eq:S-le}    
\end{align}
Using the definition of $f$ in~(\ref{eq:S-le}) and rearranging gives
\begin{align*}
\prod_{i \in S \setminus \{{\sigma(m)}\}} \alpha_i
&\le
\frac{x_{\sigma(m)}}{1 - \alpha_{\sigma(m)}} \\
& \le \frac{x_{\sigma(k)}}{1 - \alpha_{\sigma(k)}},
\end{align*}
for any other $\sigma(k) \in A$, since the sequence
$\frac{x_{\sigma(i)}}{1 - \alpha_{\sigma(i)}}$ is non-increasing in $i$. Therefore,
\begin{align}
  (1 - \alpha_{\sigma(j)})\prod_{i \in S \setminus \{\sigma(m)\}} \alpha_i \le x_{\sigma(j)}. \label{eq:x_i-le}  
\end{align}\
On the other hand,
\[
f(S^* \cup \{\sigma(j)\}) - f(S^*)
=
(1 - \alpha_{\sigma(j)})\prod_{i \in S^*} \alpha_i
\le
(1 - \alpha_{\sigma(j)})\prod_{i \in S^* \setminus \{\sigma(m)\}} \alpha_i,
\]
so combining with~(\ref{eq:x_i-le}),
\[
f(S^* \cup \{\sigma(j)\}) - f(S^*) \le x_{\sigma(j)}.
\]
Summing over $i=\sigma(j) \in A$,
\[
\sum_{\i \in A} \bigl(f(S^* \cup \{i\}) - f(S^*)\bigr)
\le \mbf x(A).
\]

Together with~(\ref{eq:submod}), this yields
\[
f(S_m) - f(S^*) \le \mbf x(A).
\]

Hence,
\[
f(S_m) - \mbf x(S_m)
=
f(S_m) -\mbf  x(S^*) - \mbf x(A)
\le
f(S^*) - \mbf x(S^*).
\]
It follows that $S_m$ minimizes $f(S)-\mbf x(S)$, which completes the proof.
\end{proof}
\subsection{Numerical Results} \label{sec:numerical}
Below we present several examples for the quiz show game that illustrate the role
of Theorem~\ref{thm:main}, Corollary~\ref{cor:big}, and Lemma~\ref{lem:algor}. Recall from Theorem~\ref{thm:main} that, when the
host plays $\mbf{y^k}$, this strategy provides the upper bound $v_k$ on the value of
the game. However, the strategy $\mbf{x^k}$ defined in Theorem~\ref{thm:main} may not lie in
$\mathcal B(f)$. Consequently, the value of the game may be strictly smaller
than~$v_k$.

Throughout this section, without loss of generality, we set $C=1$. Therefore $M/C$ is $M$. We consider two families of examples,
each with a different choice of the vector $\mbf \alpha$. In the first family, we take
$\mbf{\alpha}^T=(0.2,0.4,0.9)$. For various values of $M$, Table~\ref{tab:sol1} reports the parameters defined in
Theorem~\ref{thm:main}, including $k$, $\lambda_k$, $v_k$, and the actual value of the game,
denoted by $v$. In order to calculate $v$ we used Game Theory Explorer \citep{savani2015gte}, available at http://app.test.logos.bg/. The lower bound on $M/C=M$ given in Corollary~\ref{cor:big} is equal to $5.5$.
Thus, when $M>5.5$, we have $k=3$, and the value of the game is equal to
$v_3$, as predicted by Corollary~\ref{cor:big}. However, when $M\le 5.5$, the strategy
$\mbf x^k$ does not lie in $\mathcal B(f)$, as can be checked using Lemma~\ref{lem:algor}. Hence,
$v_k$ is only an upper bound and is not, in general, the value of the game.

%Here, $v$ and $v_k$ varies when $k=1,2$. $M/C=2$ is the boundary of $k=1,2$; $M/C=5.5$ is the boundary of $k=2,3$. The intro of the head of the chart. how to find $v_k$ and $v$. the fact of decresing
%lemma 6 is used when $k=2$

%\begin{table}
%\centering
%\caption{example 1}
%\label{tab:dist_by_condition}
%\footnotesize
%\setlength{\tabcolsep}{3pt}
%\resizebox{\columnwidth}{!}{%
\begin{table}[h]
    \centering
    \caption{Comparison of $v_k$ with value of game for $\mbf \alpha^T=(0.2,0.4,0.9)$}
    \label{tab:sol1}
    \begin{tabular}{c|c|c|c|c|c}
        $M$ & $k$ & $\lambda_k$ & $v_k$ & $v$ & $v/v_k$ \\
        \hline
0.001 & 1 & 0.3334 & 0.3094 & 0.1002 & 32.39\%  \\
0.1   & 1 & 0.3417 & 0.3171 & 0.1151 & 36.30\% \\
1     & 1 & 0.4167 & 0.3867 & 0.2505 & 64.78\%  \\
1.9   & 1 & 0.4917 & 0.4563 & 0.3860 & 84.59\%  \\
2.1   & 2 & 0.5143 & 0.4773 & 0.4161 & 87.18\% \\
5.4   & 2 & 0.9857 & 0.9147 & 0.9129 & 99.80\% \\
\hline
Corollary 5 Lower Bound $=5.5$ \\
\hline
5.6   & 3 & 1.0178 & 0.9445 & 0.9445 & 100.00\% \\
10    & 3 & 1.8200 & 1.6722 & 1.6722 & 100.00\% \\
100   & 3 & 17.8416 & 16.5570 & 16.5570 & 100.00\%
    \end{tabular}
\end{table}

%For instance of that $\alpha$'s are 0.2 0.8 0.9 with various $M/C$, it shows the different $v_k$ and $v$. Here, $v$ and $v_k$ varies when $k=1$. $M/C=2$ is the boundary of $k=1,2$; $M/C=4.25$ is the boundary of $k=2,3$.

\begin{table}[h]
    \centering
    \caption{Comparison of $v_k$ with value of game for $\mbf \alpha^T=(0.2,0.8,0.9)$}
    \label{tab:sol2}
    \begin{tabular}{c|c|c|c|c|c}
$M$ & $k$ & $\lambda_k$ & $v_k$ & $v$ & $v/v_k$ \\
\hline
0.1 & 1 & 0.3417 & 0.2925 & 0.1164 & 39.79\% \\
1 & 1 & 0.4167 & 0.3567 & 0.2640 & 74.01\% \\
1.9 & 1 & 0.4917 & 0.4209 & 0.4116 & 97.79\% \\
2.1 & 2 & 0.5222 & 0.4470 & 0.4470 & 100.00\% \\
4.2 & 2 & 0.9889 & 0.8465 & 0.8465 & 100.00\% \\
\hline
\mbox{Corollary 5 Lower Bound=4.25} \\
\hline4.3 & 3 & 1.0115 & 0.8658 & 0.8658 & 100.00\% \\
10 & 3 & 2.3185 & 1.9846 & 1.9846 & 100.00\%
    \end{tabular}
\end{table}

In the second family of examples, we take $\mbf{\alpha}^T=(0.2,0.8,0.9)$. The results are shown in Table~\ref{tab:sol2}.
Here, although $M<4.25$ implies that $k=2$, Lemma~\ref{lem:algor} can be used to show that for $M=2.1$ and $M=4.2$,
$\mbf x^2$ lies in $\mathcal B(f)$. Therefore, $\mbf{x^2}$ is feasible and optimal, and
$v_2$ is the value of the game.

The last column of the tables report the ratio $v/v_k$, which measures how tight the upper
bound $v_k$ is. A ratio of $100\%$ indicates that the upper bound is attained and
therefore $v=v_k$. Ratios strictly below $100\%$ indicate precisely the cases in
which the strategy $\mbf{x^k}$ from Theorem~\ref{thm:main} is not feasible for the quiz show game,
so that the actual game value is strictly smaller than the bound supplied by
$\mbf y^k$. It is interesting to note that for our numerical examples, the ratio $v/v_k$ is non-decreasing in $M$. We conjecture that this is true in general.

Recall from the discussion following Lemma~\ref{lem:ineq} that there must be an optimal strategy for Player~2 for which one of the inequalities~(\ref{eq:ineq1}) holds with equality. We call the set of feasible strategies for which~(\ref{eq:ineq1}) holds with equality (for a given $m$) the {\em $m$-type facet} of Player 2's strategy set. In the first family of examples, when $M>5.5$ ($k=3$), the optimal $\mbf y^*$ lies on the 3-type facet (as we know from Corollary~\ref{cor:big}); however, when $M<5.5$, whether $k$ is $1$ or $2$, the optimal $\mbf y^*$ lies on the intersection of the 2-type facet and the 3-type facet. 
%In the second family of examples, when $M>4.25$ ($k=3$), the optimal $ \mbf y^*$ lies on 3-type facet; when $4.25>M/C>2$ ($k=2$), the optimal $y^*$ lies on 2-type facet; when $2>M/C$ ($k=1$), the optimal $y^*$ lies on the intersection of 1-type facet and 2-type facet. Note that the equilibrium is obtained by the site...
In the second family of examples, except in the case of $k=1$, whatever the value of $k$ is, the optimal solution found lies on the $k$-type facet.

\section{Conclusion}

We have introduced three variants of a quiz show game, motivated by national security applications where information may be unveiled in two different ways. Finding solutions to two of the variants was fairly straightforward, but a general solution to the third variant remains elusive. However, we were able to prove some elegant structural properties of this variant by regarding it as a special case of a more general game, and we are optimistic that further work may yield a more complete characterization of its optimal strategies.

\section*{Acknowledgments}
This material is based upon work supported by the Air Force Office of
Scientific Research under award number FA9550-23-1-0556.

\bibliographystyle{apalike}
\bibliography{references}

\begin{thebibliography}{}

\bibitem[Agnetis et~al., 2022]{agnetis2022replication}
Agnetis, A., Benini, M., Detti, P., Hermans, B., and Pranzo, M. (2022).
\newblock Replication and sequencing of unreliable jobs on parallel machines.
\newblock {\em Computers \& Operations Research}, 139:105634.

\bibitem[Agnetis et~al., 2025a]{agnetis2025replication}
Agnetis, A., Benini, M., Detti, P., Hermans, B., Pranzo, M., and Schewior, K.
  (2025a).
\newblock Replication and sequencing of unreliable jobs on m parallel machines:
  New results.
\newblock {\em Computers \& Operations Research}, 183:107085.

\bibitem[Agnetis et~al., 2009]{agnetis2009sequencing}
Agnetis, A., Detti, P., Pranzo, M., and Sodhi, M.~S. (2009).
\newblock Sequencing unreliable jobs on parallel machines.
\newblock {\em Journal of Scheduling}, 12:45--54.

\bibitem[Agnetis et~al., 2025b]{agnetis2025unreliable}
Agnetis, A., Leus, R., Perneel, E., and Salvadori, I. (2025b).
\newblock The unreliable job selection and sequencing problem.
\newblock {\em arXiv preprint arXiv:2511.17105}.

\bibitem[Agnetis and Lidbetter, 2020]{agnetis2020largest}
Agnetis, A. and Lidbetter, T. (2020).
\newblock The largest-z-ratio-first algorithm is 0.8531-approximate for
  scheduling unreliable jobs on m parallel machines.
\newblock {\em Operations Research Letters}, 48(4):405--409.

\bibitem[Alpern and Gal, 2003]{alpern&gal03book}
Alpern, S. and Gal, S. (2003).
\newblock {\em {The Theory of Search Games and Rendezvous}}.
\newblock Springer.

\bibitem[Bram, 1963]{bram19632}
Bram, J. (1963).
\newblock A 2-player n-region search game.
\newblock {\em OEG IRM-31 (AD 402914), Washington}.

\bibitem[Fonlupt and Skoda, 2009]{fonlupt2009strongly}
Fonlupt, J. and Skoda, A. (2009).
\newblock Strongly polynomial algorithm for the intersection of a line with a
  polymatroid.
\newblock In {\em Research Trends in Combinatorial Optimization: Bonn 2008},
  pages 69--85. Springer.

\bibitem[Fujishige, 1980]{fujishige1980lexicographically}
Fujishige, S. (1980).
\newblock Lexicographically optimal base of a polymatroid with respect to a
  weight vector.
\newblock {\em Mathematics of Operations Research}, 5(2):186--196.

\bibitem[Garnaev, 2012]{garnaev2012search}
Garnaev, A. (2012).
\newblock {\em Search games and other applications of game theory}, volume 485.
\newblock Springer Science \& Business Media.

\bibitem[Gr{\"o}tschel et~al., 2012]{grotschel2012geometric}
Gr{\"o}tschel, M., Lov{\'a}sz, L., and Schrijver, A. (2012).
\newblock {\em Geometric algorithms and combinatorial optimization}, volume~2.
\newblock Springer Science \& Business Media.

\bibitem[Hellerstein and Lidbetter, 2023]{hellerstein2023game}
Hellerstein, L. and Lidbetter, T. (2023).
\newblock A game theoretic approach to a problem in polymatroid maximization.
\newblock {\em European Journal of Operational Research}, 305(2):979--988.

\bibitem[Hoeksma et~al., 2014]{hoeksma2014decomposition}
Hoeksma, R., Manthey, B., and Uetz, M. (2014).
\newblock Decomposition algorithm for the single machine scheduling polytope.
\newblock In {\em International Symposium on Combinatorial Optimization}, pages
  280--291. Springer.

\bibitem[Hohzaki, 2016]{hohzaki2016search}
Hohzaki, R. (2016).
\newblock Search games: Literature and survey.
\newblock {\em Journal of the Operations Research Society of Japan},
  59(1):1--34.

\bibitem[Isaacs, 1965]{Isaacs-Book-1965}
Isaacs, R. (1965).
\newblock {\em Differential Games}.
\newblock Wiley, New York.

\bibitem[Kadane, 1969]{kadane1969quiz}
Kadane, J.~B. (1969).
\newblock Quiz show problems.
\newblock {\em Journal of Mathematical Analysis and Applications}, 26:609--623.

\bibitem[Lidbetter, 2020]{lidbetter2020search}
Lidbetter, T. (2020).
\newblock Search and rescue in the face of uncertain threats.
\newblock {\em European Journal of Operational Research}, 285(3):1153--1160.

\bibitem[Lidbetter, 2025]{lidbetter2025review}
Lidbetter, T. (2025).
\newblock A review of minimum cost box searching games.
\newblock {\em arXiv preprint arXiv:2502.10551}.

\bibitem[Savani and von Stengel, 2015]{savani2015gte}
Savani, R. and von Stengel, B. (2015).
\newblock Game theory explorer: Software for the applied game theorist.
\newblock {\em Computational Management Science}, 12(1):5--33.

\bibitem[Stadje, 1995]{stadje1995selecting}
Stadje, W. (1995).
\newblock Selecting jobs for scheduling on a machine subject to failure.
\newblock {\em Discrete applied mathematics}, 63(3):257--265.

\bibitem[Stone, 1976]{stone1976theory}
Stone, L.~D. (1976).
\newblock {\em Theory of optimal search}, volume 118.
\newblock Elsevier.

\bibitem[Yolmeh and Baykal-G{\"u}rsoy, 2021]{yolmeh2021weighted}
Yolmeh, A. and Baykal-G{\"u}rsoy, M. (2021).
\newblock Weighted network search games with multiple hidden objects and
  multiple search teams.
\newblock {\em European Journal of Operational Research}, 289(1):338--349.

\end{thebibliography}

\end{document}